\newcommand{\method}{{DivMF}\xspace}
\newcommand{\methodfull}{Diversely Regularized Matrix Factorization\xspace}
\newtheoremstyle{problemstyle}  % <name>
        {3pt}                                               % <space above>
        {3pt}                                               % <space below>
        {\normalfont}                               % <body font>
        {}                                                  % <indent amount}
        {\bfseries\itshape}                 % <theorem head font>
        {\normalfont\bfseries:}         % <punctuation after theorem head>
        {.5em}                                          % <space after theorem head>
        {}                                                  % <theorem head spec (can be left empty, meaning `normal')>
\theoremstyle{problemstyle}
\newcommand{\rom}[1]{\uppercase\expandafter{\romannumeral #1\relax}}
\renewcommand\footnotetextcopyrightpermission[1]{} % removes footnote with conference information in first column
\begin{document}
%\acmConference[WSDM 2023]{ACM International Conference on Web Search and Data Mining}{Feb 2023}{Singapore}
%\acmYear{2023}
%\copyrightyear{2023}
\title{
	Diversely Regularized Matrix Factorization for Accurate and Aggregately Diversified Recommendation
%	Top-$k$ Diversity Regularization for Accurate and Aggregately Diversified Recommender Systems
}

\author{Jongjin Kim}
\affiliation{
	\institution{Seoul National University}
	\country{Seoul, Korea}
}
\email{j2kim99@snu.ac.kr}

\author{Hyunsik Jeon}
\affiliation{
	\institution{Seoul National University}
	\country{Seoul, Korea}
}
\email{jeon185@snu.ac.kr}

\author{Jaeri Lee}
\affiliation{
	\institution{Seoul National University}
	\country{Seoul, Korea}
}
\email{jlunits2@snu.ac.kr}

\author{U Kang}
\affiliation{
	\institution{Seoul National University}
	\country{Seoul, Korea}
}
\email{ukang@snu.ac.kr}

%\author{
%	Anonymous authors
%}

\begin{abstract}
When recommending personalized top-$k$ items to users, how can we recommend the items diversely to them while satisfying their needs?
%Aggregately diversified recommender systems aim to achieve high coverage and non-skewness without sacrificing the recommendation accuracy.
Aggregately diversified recommender systems aim to recommend a variety of items across whole users without sacrificing the recommendation accuracy.
They increase the exposure opportunities of various items, which in turn increase potential revenue of sellers as well as user satisfaction.
However, it is challenging to tackle aggregate-level diversity with a matrix factorization (MF), one of the most common recommendation model,
since skewed real world data lead to skewed recommendation results of MF.

In this work, we propose \method (\methodfull), a novel matrix factorization method for aggregately diversified recommendation.
\method regularizes a score matrix of an MF model to maximize coverage and entropy of top-$k$ recommendation lists to aggregately diversify the recommendation results. 
We also propose an unmasking mechanism and carefully designed mini-batch learning technique 
for accurate and efficient training. 
Extensive experiments on real-world datasets show that \method achieves the state-of-the-art performance in aggregately diversified recommendation.

%After training the model to accurately predict user-item interactions, \method further regularizes the model to maximize coverage and entropy of top-$k$ recommendations to aggregately diversify the recommendation results.}
%%\method regularizes a recommendation model in the training process so that the model achieves high coverage and non-skewness.
%%\method employs coverage and skewness regularizers which maximize the coverage and entropy, respectively.
%\blue{\method enables} more diverse items to appear uniformly on top-$k$ recommendations, while maintaining \blue{personalization} accuracy.
%\blue{Extensive experiments} on real-world datasets show that \method achieves the state-of-the-art performance in aggregately diversified recommendation.

\end{abstract}
\pagestyle{plain}
\maketitle

\begin{figure}[t]
\centering
\subfigure[Users' preferences]{\includegraphics[width=0.15\textwidth]{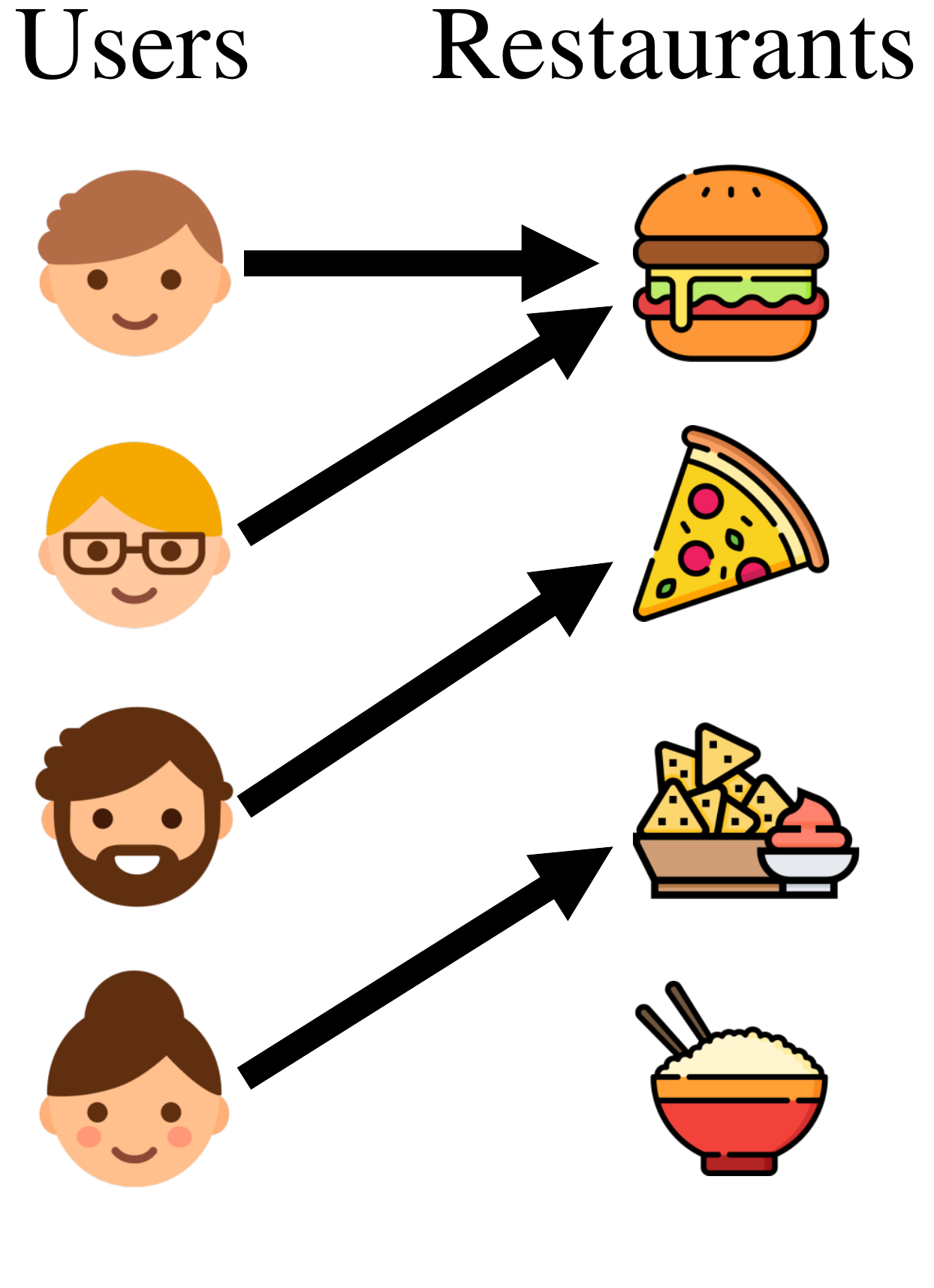}\label{fig:illustration1}}
\subfigure[Three different recommendation results]{\includegraphics[width=0.31\textwidth]{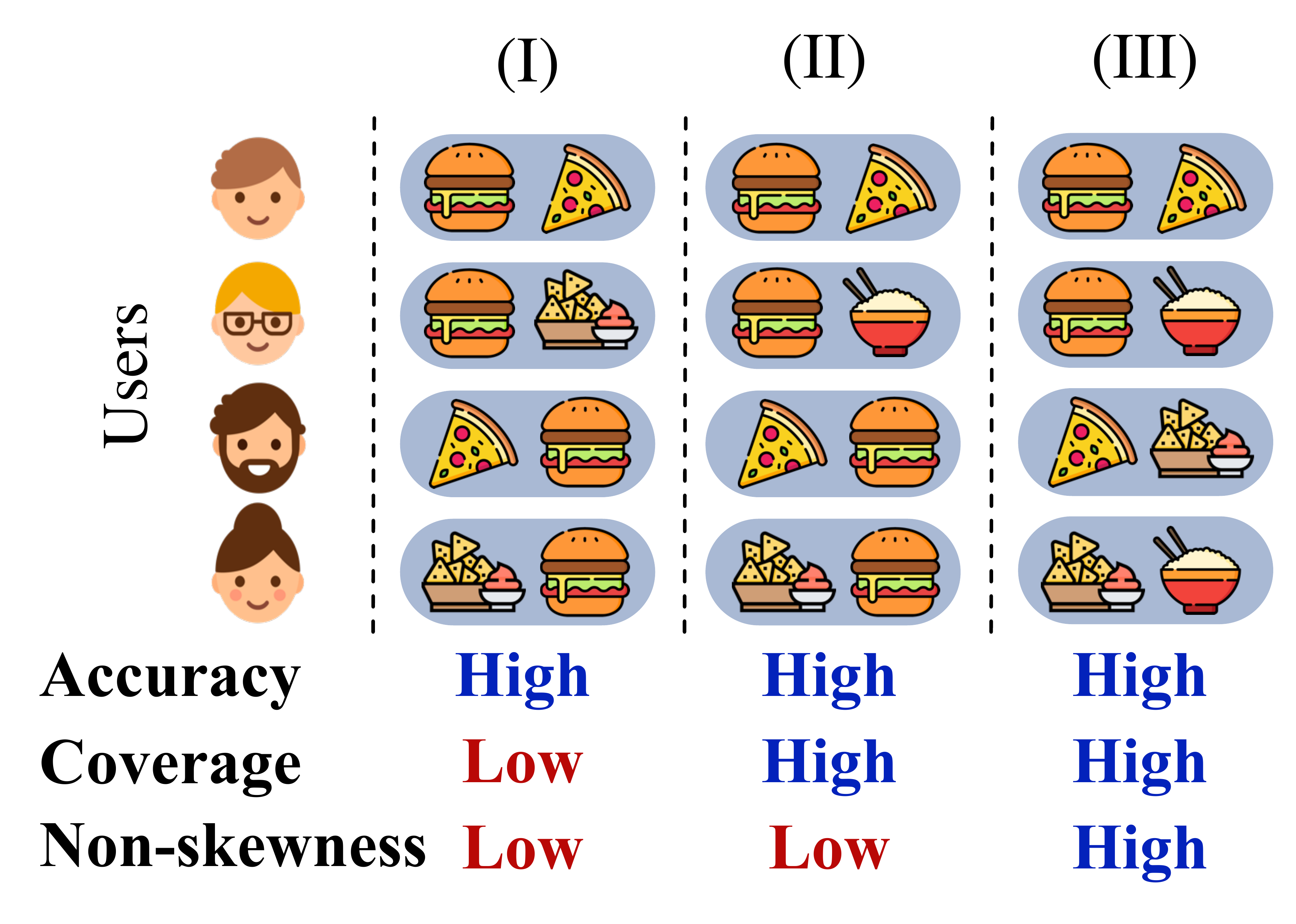}\label{fig:illustration2}}
\caption{
Comparison of three different recommendation results.
The aggregate-level diversities (coverage and non-skewness) in the results (\rom{1}), (\rom{2}), and (\rom{3}) are significantly different, although the accuracies are the same.
Aggregately diversified recommendation (our goal) aims to achieve the result (\rom{3}) where coverage and non-skewness are high with a comparable accuracy.
}
\label{fig:illustration}
\end{figure}

\section{Introduction}
\label{sec:introduction}

\textit{When recommending personalized top-$k$ items to users, how can we recommend the items diversely while satisfying their needs?}
%Recommender systems (RS) help users to get personalized suggestions.
Customers heavily rely on recommender systems to choose items due to the flood of information nowadays.
Thus, it is ideal to expose as many items as possible to users to maximize potential revenue of sales platforms.
%for sales platform owners' perspective,
%who would like to sell all kinds of items they have.
This is even more crucial if items have limited capacities such as in restaurants or movie tickets,
since excessive recommendation on a few popular items would not be helpful to increase total revenue of a system if those items' capacities are exceeded.
Hence, aggregate-level diversity in recommender system is important to increase system owners' profit in those situations.

Aggregate-level diversity requires that the recommendation results are of high coverage and low skewness;
coverage indicates the proportion of recommended items among all items, and skewness indicates the degree of unfairness in the frequencies of recommended items.
Both coverage and non-skewness need to be considered in order for all items have fair chances to be exposed.
Figure~\ref{fig:illustration} demonstrates the coverage and non-skewness of three different recommendation results.
Figure~\ref{fig:illustration1} illustrates the ground-truth preferences of users and Figure~\ref{fig:illustration2} exemplifies three different recommendation results.
In the figure, result (\rom{1}) achieves high recommendation accuracy, but low coverage and non-skewness;
all users are recommended an item that they prefer, but some items do not appear on the recommendation or infrequently appear compared to other items.
Result (\rom{2}) achieves high coverage as well as high recommendation accuracy;
note that all the four items are recommended.
Result (\rom{3}) further achieves high non-skewness, since all items are recommended twice.
Note that results (\rom{1}), (\rom{2}), and (\rom{3}) all achieve high individual-level diversity,
since they all recommend two different items for each user. 
However, only result (\rom{3}) achieves a high aggregate-level diversity, recommending
each item by the same amount, maximizing the potential revenue of sales platforms.
%\green{Thus, every item has exposed enough times to get chance to be sold.}
%Note that aggregate-level diversity differs from individual-level diversity since it requires total frequencies of each item among all recommendation lists to be similar.
%Result (\rom{1}) is individually diversified since each user is recommended two different items, but not aggregately diversified.

Matrix factorization (MF)~\cite{SalakhutdinovM07,koren2009matrix} is the most widely used collaborative filtering method due to its powerful scalability and flexibility~\cite{ko2022survey}.
However, the traditional MF has a limitation in achieving high aggregate-level diversity on real-world data because it is vulnerable to the skewness of data.
Figure~\ref{fig:skewness} shows the skewness of Epinions dataset (see Section~\ref{subsec:experimental_setup} for details), a real-world e-commerce dataset, which includes about 664k interactions made by 40k users and 139k items;
x-axis indicates the popularity rank of items, and y-axis indicates the number of interactions.
As shown in the figure, the interactions are highly skewed to a few items.
For instance, 10\% of items with high interaction ranks occupy 65\% of the total interactions resulting in a long-tail distribution.
MF model tends to give large embedding vectors for popular items so that they have advantages in top-$k$ recommendation.
Hence, carefully designed approach is required to prevent the recommendations from being skewed to a few items in order to achieve high aggregate-level diversity~\cite{ParkT08}.
To overcome this problem, previous works on aggregately diversified recommendation rerank the recommendation lists or recommendation scores of a given MF model~\cite{AdomaviciusK12, AdomaviciusK14, dong2021user, KarakayaA18, MansouryAPMB20}.
However, these approaches does not give a good diversity
since they focus only on post-processing the results of MF.
Thus, it is desired to deal with aggregate-level diversity in the training process of MF to achieve both high accuracy and diversity.

\begin{figure}[t]
\vspace{-5mm}
\includegraphics[width=0.4\textwidth]{./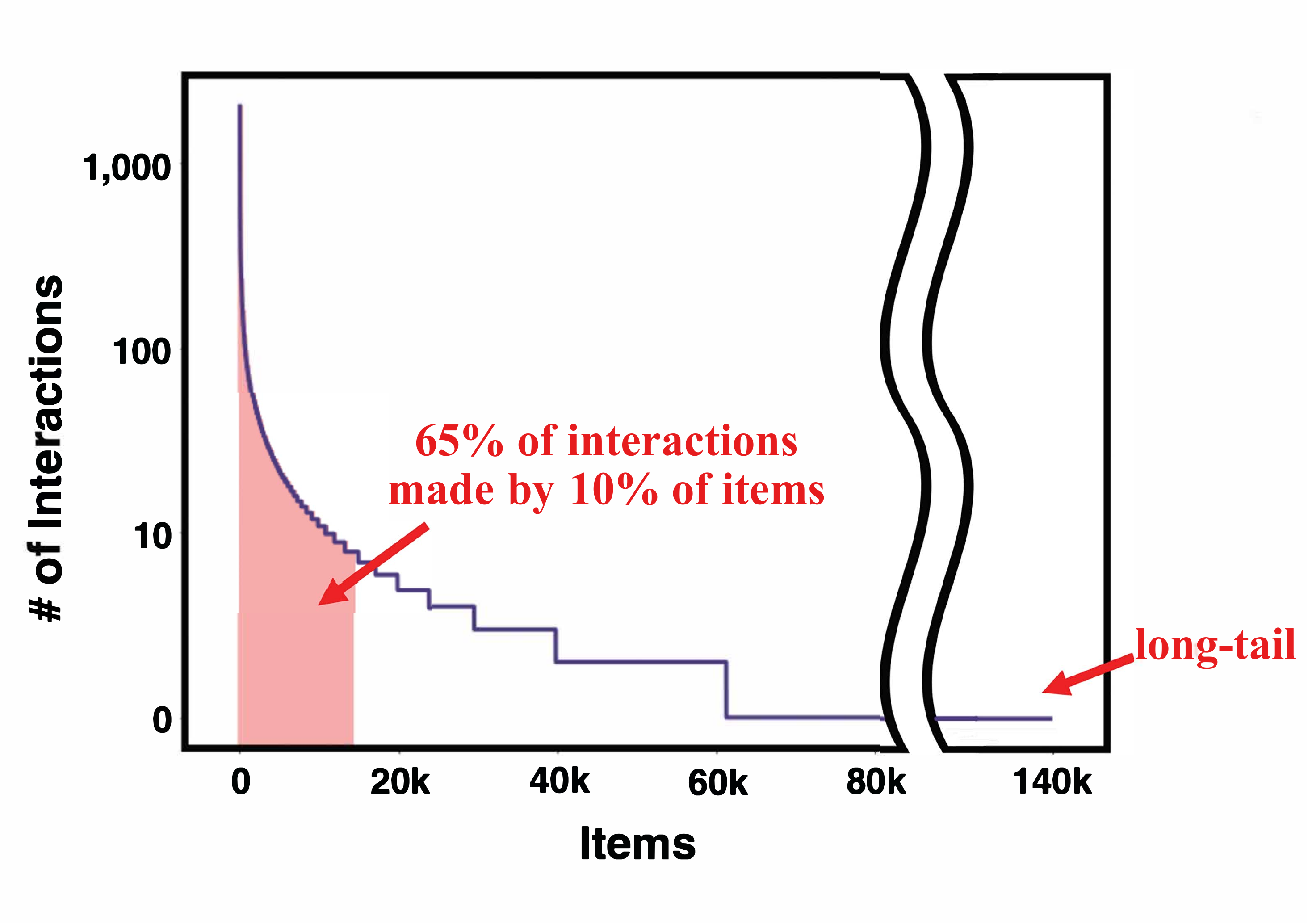}
\vspace{-5mm}
\caption{Skewness of Epinions dataset. The dataset includes 664,823 interactions made by 40,163 users and 139,738 items. 65\% of total interactions are made by 10\% of items.}
\label{fig:skewness}
\end{figure}

In this work, we propose {\methodfull} {(\method)}, a novel approach for aggregately diversified recommendation.
{\method} regularizes a recommendation model in its training process so that more diverse items appear uniformly on top-$k$ recommendations.
{\method} contains two separate regularizers: coverage and skewness regularizers.
The coverage and skewness regularizers effectively maximize the coverage and entropy of recommendation results, respectively.
Both regularizers consider the item occurrences in top-$k$ recommendation list.
This allows the model to achieve the optimal aggregate-level diversity in the training process. 
We also propose an unmasking mechanism and mini-bath learning technique 
for accurate and fast learning of \method.

Our contributions are summarized as follows:
\begin{itemize}
	\item \textbf{Method.}
		We propose {\method}, a method for aggregately diversified recommendation. {\method} provides a new way to accurately and efficiently optimize a recommender to achieve both high accuracy and aggregate-level diversity for top-$k$ recommendation.
	\item \textbf{Theory.}
		We theoretically prove that {\method} provides an optimal solution to maximize the aggregate-level diversity in top-$k$ recommendation by analyzing how our proposed diversity regularizer improves the coverage and non-skewness.
	\item \textbf{Experiments.}
		Extensive experiments show that {\method} achieves up to 34.7\% higher aggregate-level diversity in the similar level of accuracy,
		and up to 27.6\% higher accuracy in the similar level of aggregate-level diversity in personalized top-$k$ recommendation compared to the best competitors,
		resulting in the state-of-the-art performance (see Figure~\ref{FIG:performance}).
\end{itemize}

In the rest of our paper, we provide the preliminary in Section~\ref{sec:preliminary}, introduce our proposed method in Section~\ref{sec:proposed}, evaluate the proposed method and competitors on real-world datasets in Section~\ref{sec:experiments}, review related works in Section~\ref{sec:related}, and conclude this work in Section~\ref{sec:conclusion}.
Table \ref{tab:sym} lists the frequently used symbols in this paper.
The code and datasets are available at \url{https://github.com/DivMF/DivMF}.

\begin{savenotes}
\begin{table}
	\centering
	\caption{Frequently used symbols.}
	\begin{tabular}{c|l}
\toprule
\textbf{Symbol} & \textbf{Definition} \\
\midrule
$\mathbb{U}$ & set of all users \\
$\mathbb{I}$ & set of all items \\
$u$ & index of a user \\
$i$ & index of an item \\
$r_{ui}$ & ground-truth interaction between $u$ and $i$ \\
$\hat{r}_{ui}$ & recommendation score between $u$ and $i$\\
$\mathbf{R}=[r_{ui}]$ & ground-truth user-item interaction matrix \\
$\mathbf{\hat{R}}=[\hat{r}_{ui}]$ & recommendation score matrix\\
$k$ & length of a recommendation list for a user \\
$\mathbb{L}_k(u)$ & recommendation list of length $k$ for $u$ \\
\toprule
	\end{tabular}
	\label{tab:sym}
\end{table}
\end{savenotes}

%\begin{figure*}[t]
%\centering
%\subfigure{\includegraphics[width=0.8\textwidth]{./FIG/performance/perf_legend}}\\\vspace{-3mm}
%\subfigure{\includegraphics[width=1.0\textwidth]{./FIG/performance/perf_gowa}}
%\caption{The coverage@10 of \method and the competitors with respect to the recall@10. Note that \method gives the best aggregate diversity performance while not sacrificing the accuracy performance, closest to the best point that represents the highest aggregate diversity with the highest accuracy.}
%\label{fig:crown_jewel}
%\end{figure*} 

\section{Preliminary}
\label{sec:preliminary}

\subsection{Aggregately Diversified Recommendation}
\label{subsec:div_rec}
In recent years, diversification has attracted increasing attention in recommendation research since it has been recognized as an important requirement for improving user satisfaction~\cite{WuLMZT19}.
The diversity in recommender systems is classified into two levels: individual-level and aggregate-level.
%There are two levels of diversity in recommender systems: individual-level and aggregate-level.
Individual-level diversity independently considers the diversity in recommendation results of individuals, whereas aggregate-level diversity considers the diversity in the integrated recommendation results of all users.
Maximizing individual-level diversity provides diverse choices to each user while maximizing aggregate-level diversity widens the pool of total items recommended to users.
%%
%Assume that two recommender systems, each maximizing individual-level diversity and aggregate-level diversity, have to provide $5$ items to $10$ users.
%The first recommender system recommends $5$ different types of items to each user but does not consider the diversity of items recommended to all users.
%Meanwhile, the second recommender system recommends $50$ different items to all users but does not consider the diversity of items recommended to each user.
%While most research focuses on \blue{individually} diversified recommendation, Aggregately diversified recommendation has studied in recent years regarding that it can resolve the problem of online services having long-tail items.
%%Although improving individual diversity may bring about improvement of aggregate diversity, it does not guaranty.
%%
In this paper, we focus on aggregately diversified recommender systems. % to address such problem.

The problem of maximizing aggregate-level diversity is defined formally as follows.
Given a sparse user-item interaction matrix $\mathbf{R}\in\mathbb{R}^{|\mathbb{U}|\times|\mathbb{I}|}$ where $\mathbb{U}$ and $\mathbb{I}$ are sets of users and items, respectively,
the goal of aggregately diversified recommendation is to predict a dense recommendation score matrix $\mathbf{\hat{R}}$ that maximizes recommendation accuracy and aggregate-level diversity, so that the recommendation satisfies users' taste and is not skewed to few items.

The aggregate-level diversity of recommendation is evaluated by the following three metrics.
\begin{itemize}
	\item\textbf{Coverage.} Coverage measures how many different items a recommendation result contains from the whole items.
	It is defined as follows:
	\begin{equation}
		Coverage@k = \frac{\begin{vmatrix}\mathsf{U}_{u\in{\mathbb{U}}}\mathbb{L}_k(u)\end{vmatrix}}{\begin{vmatrix}\mathbb{I}\end{vmatrix}},
	\end{equation}
	where $k$ is the number of items recommended and $\mathbb{L}_k(u)$ is the set of top-$k$ recommended items for user $u$. $\mathbb{U}$ and $\mathbb{I}$ are sets of users and items, respectively.
	Coverage ranges from 0 to 1, and a high value indicates that the recommendation results cover many items, which satisfies a desired property for aggregately diversified recommendation.
	\item\textbf{Entropy.}
	The entropy of a random variable represents the uncertainty of its possible outcomes.
	High uncertainty means that the random variable has diverse outcomes.
	%Entropy is also used to evaluate the diversity of recommendation results.
	It is defined as follows:
	\begin{equation}
		Entropy = - \sum_{i\in{\mathbb{I}}}p(i)\log{p(i)},
	\end{equation}
	where $\mathbb{I}$ is a set of items.
	$p(i)=\frac{f(i)}{\sum_{j\in\mathbb{I}}{f(j)}}$ where $f(i)$ indicates the frequency of item $i$ in the recommendation results for whole users.
	A high value of entropy in recommendation results indicates that the probability of items appearing in the recommendation results is not skewed, which satisfies a desired property for aggregately diversified recommendation.
%	In diversified recommendation, it is interpreted that the higher the entropy, the higher the diversity of recommendations.
	\item\textbf{Gini index.}
	Gini index measures the inequality of item frequencies in recommendation results.
	It is defined as follows:
	\begin{equation}
		Gini = \frac{1}{|\mathbb{I}|-1}\sum_{j=1}^{|\mathbb{I}|}(2j-|\mathbb{I}|-1)p_j,
	\end{equation}
	where $p_j$ is the $j$-th least value in $\{p(i)|i\in\mathbb{I}\}$.
%	Uniform distribution of items \green{in the recommendation}, which is ideal for aggregate-level diversity, will achieve zero Gini index.
%	Meanwhile, perfect inequality of item distribution, which means only one item is recommended to all users, will make Gini index equal to $1$.
	A Gini index ranges from 0 to 1.
	A low Gini index indicates that the recommendation results show a non-skewed distribution of item frequencies, which is a desired property for aggregately diversified recommendation.
	%Meanwhile, a Gini index of the high value indicates that the recommendation results show a skewed distribution of item frequencies.
	
\end{itemize}

\section{Proposed Method}
\label{sec:proposed}

In this section, we propose \methodfull (\method), a method for accurate and aggregately diversified recommendation.

\subsection{Overview}
\label{subsec:overview}
We address the following challenges to achieve high performance of aggregately diversified recommendation:
\begin{itemize}
	\item \textbf{Coverage maximization.}
	Matrix factorization (MF) is prone to obtaining top-$k$ recommendations with low coverage where only a few items are recommended.
	How can we train MF to recommend every item at least once?
	\item \textbf{Non-skewed frequency.}
	MF is liable to achieving skewed top-$k$ recommendations.
	How can we train MF to recommend all items with similar frequencies?
	\item \textbf{Non-trivial optimization.}
	It is difficult to simultaneously handle both accuracy and diversity which are disparate criteria.
	How can we train MF to optimize the accuracy and diversity?
%	How can we train our model to optimize both accuracy and diversity which are disparate criteria?
	
\end{itemize}

The main ideas of \method are summarized as follows:
\begin{itemize}
	\item \textbf{Coverage regularizer.}
%	\green{We regularize the sum of recommendation scores of each recommended item to be equal, so that the model recommends every item at least once.}
	%
	The coverage regularizer evenly balances the recommendation scores at the item-level, enabling us to recommend each item to at least one user.
	\item \textbf{Skewness regularizer.}
%	\green{We regularize the recommendation scores of recommended items to be equal, so that the summation of scores would count the number of elements.
%	Thus, the coverage regularizer would also regularize the skewness of recommendation results.}
	%
	The skewness regularizer equalizes all the recommendation scores to assist the coverage regularizer to make the model recommends all items the same amounts of times.
	\item \textbf{Careful training.}
	We perform sequential training, not alternating training, which first focuses on accuracy and then diversity for stable training.
	We also propose an unmasking mechanism and mini-batch learning for effective and efficient training.
\end{itemize}

%Figure 3 shows the overall structure of \method.

\subsection{Definition of Diversity Regularizer}
\label{subsec:diversity}
Given a sparse interaction matrix $\mathbf{R}\in\mathbb{R}^{|\mathbb{U}|\times|\mathbb{I}|}$, our goal is to predict a dense recommendation score matrix $\mathbf{\hat{R}}$ that maximizes recommendation accuracy and aggregate-level diversity.
In this section, we propose two regularizers, coverage and skewness regularizers, to maximize the aggregate-level diversity.

\subsubsection{Coverage Regularizer}
We construct a coverage regularizer to assure that every item is recommended to at least one user.
To achieve this purpose, we note that the sum of the scores of an item would represent the occurrence of the item in top-$k$ recommendation if we eliminate the lowest $|\mathbb{I}|-k$ scores of each user.
Thus, the regularizer that equalizes the sum of scores of top-$k$ items would accomplish the goal.

Assume that $\mathbf{\hat{R}}=[\hat{r}_{ui}]$ is the recommendation score matrix where
$\hat{r}_{ui}$ is a dot product between user $u$'s embedding and item $i$'s embedding.
For $u\in \mathbb{U}$, consider $\mathbf{S}=[s_{ui}]$ where $\mathbf{S}_u = softmax(\mathbf{\hat{R}_u})$, which means $(s_{u1}, ... , s_{u|\mathbb{I}|}) = softmax(\hat{r}_{u1}, ... , \hat{r}_{u|\mathbb{I}|})$.
Let a function $top(v, k)$ takes a vector $v=(v_1, v_2, ... , v_n)$ and a positive integer $k$ as inputs, and returns $v'=(v'_1, v'_2, ... , v'_n)$ where $v'_i$ is defined as follows:
\begin{equation*}
	v'_i=
	\begin{cases}
		v_i & \text{($|\{j| v_j> v_i \text{ for } v_j\in\{v_1, v_2, \dots, v_n\}\}|< k$)}\\
		0 &\text{(otherwise)},
	\end{cases}
\end{equation*}
which means to keep only top $k$ largest elements of $v$.
Let matrix $\mathbf{T} = [t_{ui}]$ where $\mathbf{T}_u = (t_{u1}, ... , t_{u|\mathbb{I}|}) = top(\mathbf{S}_u, k)$.
Note that the nonzero element of $\mathbf{T}$ in $u$th row and $i$th column implies that the top-$k$ recommendation list of user $u$ includes item $i$.
Then, the coverage regularizer $Reg_{cov}$ is defined as follows:

\begin{align*}
	Reg_{cov} &= -\log (\prod_{i\in \mathbb{I}}\sum_{u\in \mathbb{U}} t_{ui}) = -\sum_{i\in \mathbb{I}}\log(\sum_{u\in \mathbb{U}}t_{ui}).
\end{align*}

This regularizer is useful for maximizing coverage, as shown in the following theorem. %satisfies the \green{Theorem \ref{thm:coverage}.}

\begin{theorem}
	\label{thm:coverage}
	If $Reg_{cov}$ is minimized, then coverage is maximized.
\end{theorem}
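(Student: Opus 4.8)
The plan is to connect the regularizer $Reg_{cov}$ directly to the set of covered items via the column sums of $\mathbf{T}$, and then exploit the barrier behaviour of the logarithm near $0$. First I would introduce the shorthand $c_i = \sum_{u\in\mathbb{U}} t_{ui}$ for the $i$-th column sum, so that $Reg_{cov} = -\sum_{i\in\mathbb{I}} \log c_i$. The first key step is to show that $c_i > 0$ if and only if item $i$ belongs to $\mathbb{L}_k(u)$ for some user $u$. This holds because every softmax entry $s_{ui}$ is strictly positive, so after applying $top(\cdot,k)$ the entry $t_{ui}$ equals $s_{ui} > 0$ exactly when $i$ is among the top-$k$ items of user $u$, and equals $0$ otherwise. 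Since $c_i$ is a sum of nonnegative terms, $c_i > 0$ precisely when at least one $t_{ui}$ is nonzero, i.e.\ when item $i$ is covered. Consequently $|\{i : c_i > 0\}| = \bigl|\bigcup_{u\in\mathbb{U}} \mathbb{L}_k(u)\bigr|$ and $Coverage@k = |\{i : c_i > 0\}|/|\mathbb{I}|$.

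The second step is the barrier argument. Because $\log$ is defined only for positive arguments and $-\log x \to +\infty$ as $x \to 0^+$, any uncovered item (one with $c_i = 0$) forces $Reg_{cov} = +\infty$. Hence a finite, and in particular a minimal, value of $Reg_{cov}$ is attainable only when $c_i > 0$ for every $i \in \mathbb{I}$, which by the first step means every item is covered and $Coverage@k = 1$, the maximum possible value. This establishes that a minimizer of $Reg_{cov}$ maximizes coverage.

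The main obstacle I expect is not the barrier argument itself but justifying that full coverage is actually attainable, so that the minimum is finite and the equivalence is non-vacuous: if $|\mathbb{U}|\cdot k < |\mathbb{I}|$ then some column sum is necessarily zero and $Reg_{cov}$ is identically $+\infty$, making the hypothesis degenerate. I would therefore either assume the standard recommendation regime $|\mathbb{U}|\cdot k \geq |\mathbb{I}|$, in which a fully covering assignment of top-$k$ lists exists, or phrase the conclusion more robustly as: minimizing $Reg_{cov}$ maximizes the number of covered items $|\{i : c_i > 0\}|$, since each additional covered item removes one $+\infty$ summand. A secondary point to handle with care is that $top(\cdot,k)$ is piecewise-defined and non-differentiable; for this theorem it suffices to treat $\mathbf{T}$ as the resulting top-$k$ mask and reason about its column sums, deferring differentiability and optimization concerns to the training procedure.
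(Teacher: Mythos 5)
Your proof is correct, but it follows a genuinely different route from the paper's. The paper argues via the AM--GM inequality: since $0 \le t_{ui} \le s_{ui}$ and each row of $\mathbf{S}$ sums to one, $\sum_{i\in\mathbb{I}}\sum_{u\in\mathbb{U}} t_{ui} \le |\mathbb{U}|$, hence $\exp(-Reg_{cov}) = \prod_{i\in\mathbb{I}}\sum_{u\in\mathbb{U}} t_{ui} \le (|\mathbb{U}|/|\mathbb{I}|)^{|\mathbb{I}|}$, with equality exactly when every column sum equals $|\mathbb{U}|/|\mathbb{I}|$ (Equation~\eqref{equality_D}); positivity of that common value then yields coverage $1$. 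You instead use the log-barrier property of the regularizer: $Reg_{cov}$ is finite if and only if every column sum is positive, which by strict positivity of softmax holds if and only if every item survives some user's top-$k$ mask, so any minimizer attaining a finite value must have coverage $1$. Your route is more elementary (no AM--GM needed) and in fact proves a stronger, cleaner statement---finiteness alone, not minimality, already forces full coverage; it also confronts attainability, which the paper silently ignores (the paper's equality condition requires $\sum_{i}\sum_{u} t_{ui}=|\mathbb{U}|$, i.e.\ $\mathbf{T}=\mathbf{S}$, which softmax's strict positivity rules out except in degenerate tie configurations, so its bound is really an infimum rather than an attained minimum). What the paper's route buys in exchange is the explicit characterization of the minimizer, Equation~\eqref{equality_D} (equal column sums), which is not needed for Theorem~\ref{thm:coverage} itself but is reused as a key ingredient in the proof of Theorem~\ref{thm:frequency}; your barrier argument alone could not play that downstream role. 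One small inaccuracy in your side remark: under the paper's tie-inclusive definition of $top(\cdot,k)$, the regime $|\mathbb{U}|\cdot k < |\mathbb{I}|$ does not make $Reg_{cov}$ identically $+\infty$, because tied scores allow a row of $\mathbf{T}$ to retain more than $k$ nonzero entries; this caveat, which the paper's identification of nonzero $t_{ui}$ with membership in $\mathbb{L}_k(u)$ also glosses over, does not affect your main argument.
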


\begin{proof}
		$\sum_{u\in \mathbb{U}}t_{ui}\le\sum_{u\in \mathbb{U}}s_{ui}$ for all $i\in\mathbb{I}$ since $0\le t_{ui} \le s_{ui}$ for all $u\in\mathbb{U}$ and $i\in\mathbb{I}$. Thus, using the fact that $\sum_{i\in\mathbb{I}}s_{ui}=1$ for all $u\in\mathbb{U}$,
		\begin{equation*}
			\sum_{i\in \mathbb{I}}\sum_{u\in \mathbb{U}} t_{ui} \le \sum_{i\in \mathbb{I}}\sum_{u\in \mathbb{U}} s_{ui} = \sum_{u\in \mathbb{U}}\sum_{i\in \mathbb{I}} s_{ui} = |\mathbb{U}|.
		\end{equation*}
		We thus obtain
		\begin{equation*}
			\exp(-Reg_{cov}) = \prod_{i\in \mathbb{I}}\sum_{u\in \mathbb{U}} t_{ui} \le (\frac{|\mathbb{U}|}{|\mathbb{I}|})^{|\mathbb{I}|},
		\end{equation*}
		from arithmetic geometric mean inequality.
		Equality holds if and only if
		\begin{equation}
			\label{equality_D}
			\forall i, \sum_{u\in \mathbb{U}} t_{ui}=|\mathbb{U}|/|\mathbb{I}|.
		\end{equation}
		In this case, every column of $\mathbf{T}$ has at least one nonzero element since $|\mathbb{U}|/|\mathbb{I}|>0$ and $t_{ui}\ge 0$.
		Thus, every item is included in at least one user's top-$k$ recommendation list, so the coverage is $1$.
\end{proof}

\subsubsection{Skewness Regularizer}
Although the condition to minimize the coverage regularizer guarantees the coverage of model to be $1$,
this does not guarantee other diversity metrics such as entropy and Gini index to be improved because the skewed frequency distribution of items in a recommendation result is not prevented.
For example, assume that $(t_{11}, t_{21}, ... , t_{|\mathbb{U}|1})=(\frac{1}{2}, \frac{1}{2}, 0, 0, ... , 0)$ and
$(t_{12}, t_{22}, ... , t_{|\mathbb{U}|2})=(\frac{1}{3}, \frac{1}{3}, \frac{1}{3}, 0, 0, ... , 0)$.
In this case, $\sum_{u\in\mathbb{U}}t_{u1}=\sum_{u\in\mathbb{U}}t_{u2}$ but the item $1$ is recommended twice while the item $2$ is recommended three times;
note that a nonzero $t_{ui}$ indicates that the top-$k$ recommendation list of user $u$ includes item $i$.
In other words, it is possible to meet Equation~\eqref{equality_D} if the number of nonzero elements in each column of $\mathbf{T}$ is not equal to each other, since the value of each nonzero element could vary.
This means each item is recommended a different number of times in recommendation results.
%\blue{Which means each item is recommended for different times since nonzero elements of $\mathbf{T}$ represents items in recommendation lists.}

To address the challenge, we propose a skewness regularizer to make values of positive $t_{ui}$ to be close to each other.
In this way, the sum of $t_{ui}$ would indicate the frequency of item $i$ on recommendation results, so the coverage regularizer would also optimize the skewness in recommendation lists.
Let $\mathbf{T'}=[t'_{ui}]$ be a row-wise normalized $\mathbf{T}$ which means $t'_{ui} = {t_{ui}}/{\sum_{j\in \mathbb{I}}t_{uj}}$.
The skewness regularizer $Reg_{skew}$ is defined as follows:
\begin{align*}
	Reg_{skew} &= \sum_{u\in \mathbb{U}}\sum_{i\in \mathbb{I}} t'_{ui}\log{t'_{ui}}\\
	&= -\sum_{u\in \mathbb{U}}entropy(\mathbf{T}_u).
\end{align*}

Then, this regularizer satisfies the Lemma \ref{equality_E}.

\begin{lemma}
	\label{equality_E}
	$Reg_{skew}$ is minimized if and only if all nonzero elements of each row of $\mathbf{T}$ are equal.
\end{lemma}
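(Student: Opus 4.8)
The plan is to reduce the statement to the classical maximum-entropy principle applied row by row. By the definition given just above, $Reg_{skew} = -\sum_{u\in\mathbb{U}} entropy(\mathbf{T}_u)$, where each normalized row $\mathbf{T'}_u = (t'_{u1},\dots,t'_{u|\mathbb{I}|})$ is a probability vector supported on the nonzero entries of $\mathbf{T}_u$. Since this objective is a sum of terms, each depending on a single row, and the only constraint on row $u$ is that its nonzero pattern is the one fixed by $top(\mathbf{S}_u,k)$, minimizing $Reg_{skew}$ is equivalent to independently maximizing the entropy of every row. Thus the whole lemma follows once I show that, for a fixed support, entropy is maximized exactly by the uniform distribution on that support.

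Next I would fix a row $u$ and let $m_u$ denote the number of its nonzero entries. Because the softmax outputs $\mathbf{S}_u$ are strictly positive, $top(\mathbf{S}_u,k)$ keeps exactly the $k$ largest of them, so $m_u=k$; what matters is only that $m_u$ is fixed. Applying Jensen's inequality to the concave logarithm gives
\[
 entropy(\mathbf{T}_u)=\sum_{i:\,t'_{ui}>0} t'_{ui}\log\frac{1}{t'_{ui}} \le \log\!\Big(\sum_{i:\,t'_{ui}>0} t'_{ui}\cdot\frac{1}{t'_{ui}}\Big)=\log m_u .
\]
Because the logarithm is strictly concave, equality holds if and only if the quantities $1/t'_{ui}$ are identical over the support, i.e.\ all nonzero $t'_{ui}$ in row $u$ are equal.

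Finally I would assemble the per-row bounds. Summing over $u$ yields $Reg_{skew}\ge -\sum_{u\in\mathbb{U}}\log m_u$, and the right-hand side is a constant because every $m_u$ is determined by the top-$k$ construction. Hence $Reg_{skew}$ attains its minimum precisely when every row simultaneously reaches its own maximum entropy, which by the previous step happens exactly when all nonzero $t'_{ui}$ of each row are equal. Since $t'_{ui}=t_{ui}/\sum_{j\in\mathbb{I}} t_{uj}$ and the row sum is a positive constant for each $u$, the nonzero $t'_{ui}$ are equal if and only if the nonzero $t_{ui}$ are equal, establishing both directions of the biconditional.

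The main obstacle I expect is the equality analysis rather than the inequality itself: I must invoke the strict concavity of the logarithm to turn Jensen's bound into an ``if and only if'', and I must restrict the sums to the support so that the attainable maximum is $\log m_u$ (not $\log|\mathbb{I}|$) and is actually realized by a distribution with the prescribed $m_u$ nonzero coordinates. A secondary point to verify cleanly is that the number of nonzero entries per row is genuinely fixed, so that the separable minimization is over each row's distribution alone and the lower bound $-\sum_{u\in\mathbb{U}}\log m_u$ is a true constant.
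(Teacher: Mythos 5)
Your proof is correct and follows essentially the same route as the paper's: decompose $Reg_{skew}$ into independent per-row entropies and use the fact that, on a fixed support, entropy is uniquely maximized by the uniform distribution. The only difference is one of rigor --- you supply the Jensen's-inequality/strict-concavity argument, the constant lower bound $-\sum_{u}\log m_u$, and the normalization step explicitly, where the paper merely asserts the maximum-entropy characterization; note that both arguments rest on the same implicit assumption that $top(\cdot,k)$ keeps exactly $k$ elements per row (i.e.\ ties at the $k$-th score are broken), since under the paper's literal tie-inclusive definition the support size, and hence the bound, would not be constant.
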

\begin{proof}
	Since each row of $\mathbf{T}$ is independent, $Reg_{skew} =$\\$ -\sum_{u\in \mathbb{U}}entropy(\mathbf{T}_u)$ is minimized if and only if all of $entropy(\mathbf{T}_u)$ are maximized respectively.
	Without loss of generality, let us find the condition to maximize $entropy(\mathbf{T}_1)$.	
	$\mathbf{T}_1$ has at most $k$ nonzero elements because of its definition, and $entropy(\mathbf{T}_1)$ is maximized if and only if top-$k$ elements of $\mathbf{S}_1$ are equal.
	Thus, $Reg_{skew}$ is minimized if and only if each row's top-$k$ elements of $\mathbf{S}$ are equal.
\end{proof}

\subsubsection{Diversity Loss Function}
Finally, we define a loss function for aggregate-level diversity in \method as follows:
\begin{align*}
	\mathcal{L}_{div}(\mathbf{\hat{R}}) &= Reg_{cov} + Reg_{skew}.
\end{align*}
This loss function satisfies the Theorem \ref{thm:frequency}.

\begin{theorem}
	\label{thm:frequency}
If $\mathcal{L}_{div}(\mathbf{\hat{R}})$ is minimized, then
	Gini index is zero, entropy is maximized, and coverage is maximized.
\end{theorem}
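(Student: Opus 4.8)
The plan is to exploit the additive structure $\mathcal{L}_{div} = Reg_{cov} + Reg_{skew}$ together with the two results already established. For any functions bounded below, $\min(Reg_{cov} + Reg_{skew}) \ge \min Reg_{cov} + \min Reg_{skew}$, with equality exactly when the two regularizers admit a common minimizer. So the key is to (i) lower-bound each regularizer separately, (ii) produce a single score matrix $\mathbf{\hat{R}}$ attaining both lower bounds at once, and (iii) conclude that every minimizer of $\mathcal{L}_{div}$ must therefore minimize $Reg_{cov}$ and $Reg_{skew}$ simultaneously. From that point the three diversity claims follow from the equality characterizations.

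Concretely, I would first recall from the proof of Theorem~\ref{thm:coverage} that the arithmetic--geometric-mean bound gives $Reg_{cov} \ge -|\mathbb{I}|\log(|\mathbb{U}|/|\mathbb{I}|)$, and that this bound is attained only when all column sums $\sum_{u\in\mathbb{U}} t_{ui}$ are equal \emph{and} the total mass $\sum_{u,i} t_{ui}$ is as large as possible, i.e. every row sum $\sum_{i\in\mathbb{I}} t_{ui}$ equals $1$ (no softmax mass falls outside the top-$k$). I would then recall from Lemma~\ref{equality_E} that $Reg_{skew} \ge -|\mathbb{U}|\log k$, attained exactly when the $k$ nonzero entries of each row of $\mathbf{T}$ are equal. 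To show the two bounds are jointly attainable, I would exhibit the balanced configuration in which each user's top-$k$ scores are equal (with the remaining scores driven to $-\infty$) and the top-$k$ sets are assigned so that every item is selected by exactly $k|\mathbb{U}|/|\mathbb{I}|$ users; this makes every nonzero $t_{ui}$ equal to $1/k$, every row sum $1$, and every column sum $|\mathbb{U}|/|\mathbb{I}|$, hitting both bounds. Hence $\min\mathcal{L}_{div} = \min Reg_{cov} + \min Reg_{skew}$, and any minimizer satisfies both Equation~\eqref{equality_D} and the per-row-equality condition.

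It remains to translate these two conditions into the metric statements. The decisive step is that the row-sum-$1$ part of the coverage condition and the equal-nonzero-entries part of the skewness condition together force \emph{every} nonzero entry of $\mathbf{T}$ to equal $1/k$ globally, not merely within a row: each row has $k$ equal nonzero entries summing to $1$. Consequently the column sum of item $i$ equals $f(i)/k$, where $f(i)$ is the number of users recommending $i$; equal column sums then force $f(i) = k|\mathbb{U}|/|\mathbb{I}|$ for every $i$, so the frequency distribution $p(i)$ is uniform. Uniformity immediately yields maximal entropy, Gini index $0$, and---since every $f(i) > 0$---coverage $1$, which is the claim.

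The main obstacle is precisely the gap flagged in the skewness subsection: equal column sums alone (the conclusion of Theorem~\ref{thm:coverage}) do \emph{not} imply equal frequencies, because a column's mass can be split among entries of different sizes. The proof must therefore lean on the stronger consequence of the coverage \emph{minimum}---that each row sum is $1$---which is what upgrades Lemma~\ref{equality_E}'s per-row equality into a single global value $1/k$ and thereby makes $\sum_{u} t_{ui}$ a faithful proxy for the integer frequency $f(i)$. A secondary technical point to handle carefully is attainability: with finite scores the softmax places positive mass outside the top-$k$, so row sum $1$ is reached only in a limit, and the balanced assignment requires $k|\mathbb{U}|/|\mathbb{I}|$ to be an integer; I would either argue in this idealized limit or replace the exact equalities by the corresponding infima and near-balanced assignments.
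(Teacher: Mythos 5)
Your proposal follows essentially the same route as the paper's proof: it decomposes $\mathcal{L}_{div}$ into the two regularizers, invokes the equality condition of Equation~\eqref{equality_D} from Theorem~\ref{thm:coverage} and the per-row equality from Lemma~\ref{equality_E}, shows the two are jointly attainable by the balanced configuration in which each row of $\mathbf{S}$ has exactly $k$ nonzero entries equal to $1/k$ and each column has $k|\mathbb{U}|/|\mathbb{I}|$ of them, and concludes uniform item frequencies, hence zero Gini index, maximal entropy, and full coverage. The only difference is one of rigor, in your favor: you explicitly exhibit the attaining configuration and flag the attainability caveats (softmax mass outside the top-$k$ vanishes only in the limit, and $k|\mathbb{U}|/|\mathbb{I}|$ must be an integer), which the paper passes over with the phrase ``which are possible to meet.''
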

\begin{proof}
	$\mathcal{L}_{div}(\mathbf{\hat{R}}) = Reg_{cov} + Reg_{skew}$ is minimized if and only if $Reg_{cov}$ and $Reg_{skew}$ are both minimized simultaneously if such condition is possible.
	The condition to minimize $Reg_{cov}$ is the Equation~\eqref{equality_D},
	and the condition to minimize $Reg_{skew}$ is given in the Lemma~\ref{equality_E}.
	To meet the Equation~\eqref{equality_D}, all elements of $\mathbf{S}$ should be zero except for each row's top-$k$ elements,
	which is equivalent to $\mathbf{T}=\mathbf{S}$,
	and the sum of elements in each column of $\mathbf{S}$ should be equal to $\frac{|\mathbb{U}|}{|\mathbb{I}|}$.
	Meanwhile, top-$k$ elements of each row of $\mathbf{S}$ should be equal to each other to meet the condition given in the Lemma~\ref{equality_E}.
	Hence, the conditions to minimize both regularizers are that each row of $\mathbf{S}$ contains $|\mathbb{I}|-k$ zeros and $k$ nonzero elements with value of $\frac{1}{k}$, and each column of $\mathbf{S}$ contains $\frac{|\mathbb{U}|k}{|\mathbb{I}|}$ nonzero elements,
	which are possible to meet.
	Thus, these are also the conditions to minimize $\mathcal{L}_{div}(\mathbf{\hat{R}})$.
	In this case, every item appears with equal frequency since every column of $\mathbf{T}$ contains $\frac{|\mathbb{U}|k}{|\mathbb{I}|}$ nonzero elements.
	Therefore, Gini index is zero, entropy would be maximized, and coverage would be maximized if $\mathcal{L}_{div}(\mathbf{\hat{R}})$ is minimized.
\end{proof}

\subsection{Model Training}

\subsubsection{Objective Function and Training Algorithm}
\label{subsec:objective}
In order to maximize accuracy and aggregate-level diversity of recommendation results simultaneously, we propose the following objective function.
\begin{equation*}
	\mathcal{L}_{total}(\theta;\mathbf{R}) = \mathcal{L}_{acc}(\mathbf{\hat{R}}) + \mathcal{L}_{div}(\mathbf{\hat{R}}),
\end{equation*}
where $\mathcal{L}_{total}(\cdot)$ is the total loss that is to be minimized,
$\mathcal{L}_{acc}(\cdot)$ and $\mathcal{L}_{div}(\cdot)$ are losses for accuracy and aggregate-level diversity, respectively,
$\mathbf{R}$ is the observed interaction matrix,
$\mathbf{\hat{R}}$ is the recommendation score matrix,
and $\theta$ is the parameter to be optimized.
There are a lot of potential $\mathcal{L}_{acc}(\mathbf{\hat{R}})$ functions such as RMSE, MSE, BPR, etc.;
we use BPR loss function since it is known to show the best performance in top-$k$ recommendation~\cite{RendleFGS09}.
Thus,
\begin{equation*}
	\mathcal{L}_{acc}(\mathbf{\hat{R}})
	= -\sum_{u\in\mathbb{U}, (i, j)\in \mathbb{Z}(u)}
	{\log(\frac{1}{1+\exp(-(\mathbf{\hat{R}}_{ui}-\mathbf{\hat{R}}_{uj}))})},
\end{equation*}
where $\mathbb{Z}(u)=\{(i, j)| \mathbf{R}_{ui} = 1, \mathbf{R}_{uj}=0\}$.

A challenge in minimizing the loss $\mathcal{L}_{total}$
is that directly minimizing $\mathcal{L}_{total}$ or optimizing $\mathcal{L}_{acc}$ and $\mathcal{L}_{div}$ in an iterative, alternating fashion leads to a poor performance.
We presume that this problem happens because the gradients of accuracy loss and diversity regularizer cancel each other out.
The accuracy loss tries to increase the gap between recommendation scores of high scored items and low scored items, while the diversity regularizer tries to decrease the gap.
Thus, the net gradient is not large enough to prevent the model from being trapped in bad local optima.

Our idea to avoid this issue is
to train \method model with only accuracy loss until the accuracy converges, and then train the model with the diversity regularizer $\mathcal{L}_{div}(\cdot)$.
In this way, the gradients of accuracy loss and diversity regularizer do not cancel each other out since the optimizer optimizes only one loss at a time.
Algorithm \ref{al:training} shows the overall process of \method.
\method fits the model to generate a valid user-item score prediction matrix $\mathbf{\hat{R}}$ in lines 1 to 4.
Then \method modifies the parameters of model to increase the aggregate diversity in lines 5 to 8.
The model achieves higher diversity and lower accuracy as we increase $n_{ep}$, the number of epochs to optimize the diversity regularizer.

\renewcommand{\algorithmicrequire}{\textbf{Input:}}
\renewcommand{\algorithmicensure}{\textbf{Output:}}
\begin{algorithm}[t]
	\caption{Training \method}
%	\caption{Training \method (\methodfull)}
	\begin{algorithmic}[1]
		\REQUIRE User-item interaction matrix $\mathbf{R}$, and accuracy-diversity trade-off hyperparameter $n_{ep}$
		\ENSURE Trained model parameters $\theta$
		\WHILE {(Validation accuracy does not converge)}
			\STATE Calculate the recommendation loss $\mathcal{L}_{acc}(\mathbf{\hat{R}})$
			\STATE Update the parameters $\theta$ to minimize $\mathcal{L}_{acc}(\mathbf{\hat{R}})$
		\ENDWHILE
		\FOR{$[ 1, n_{ep} ]$}
			\STATE Calculate the diversity loss $\mathcal{L}_{div}(\mathbf{\hat{R}})={Reg_{cov}}+Reg_{skew}$
			\STATE Update the parameters $\theta$ to minimize $\mathcal{L}_{div}(\mathbf{\hat{R}})$
		\ENDFOR
	\end{algorithmic}
	\label{al:training}
\end{algorithm}

\subsubsection{Unmasking Mechanism}
\label{subsec:unmasking}
Gradients from $\mathcal{L}_{div}(\mathbf{\hat{R}})$ do not flow directly into unrecommended items since $\mathbf{T}$ masks $|\mathbb{I}|-k$ items with the lowest scores in $\mathbf{S}$ of each user.
Therefore, a straightforward gradient descent with $\mathcal{L}_{div}(\mathbf{\hat{R}})$
does not actively find new items for diversity,
optimizing only $k$ item scores initially selected.
%
%\blue{However, it increases the score of rarely recommended items so that the items are recommended after each updating step once they appeared during model training.}

%
We propose an unmasking mechanism to overcome this problem.
The idea is to keep additional unmasked elements in each row of $\mathbf{S}$  while building $\mathbf{T}$.
In this way, rarely recommended items are more likely to be unmasked.
\method finds new rarely recommended items by a gradient descent with this unmasking mechanism.
\method unmasks a fixed number of the highest scored items other than already recommended items during each iteration of training,
which is the best unmasking scheme as experimentally shown in Section~\ref{subsec:unmask}.

\subsubsection{Mini-Batch Learning}
\label{subsec:batch}
%
%$\mathcal{L}_{div}(\mathbf{\hat{R}})$ requires $\Omega(|\mathbb{U}|k)$ memory space to optimize since all nonzero entries of $\mathbf{T}$ are required to calculate $Reg_{cov}$.
%However, $\Omega(|\mathbb{U}|k)$ memory space is prohibitively expensive, and
%would easily exceed the memory capacity of hardware for large real world datasets with a lot of users.

We propose a mini-batch learning technique to accelerate the training process of \method.
Mini-batch of \method is a submatrix $\mathbf{\hat{R}_b}\in \mathbb{R}^{r_b\times c_b}$ of $\mathbf{\hat{R}}$ that has $r_b$ rows and $c_b$ columns. 
We assume users and items are independent of other users and items, respectively. 
For the reason, the column wise sum of $t_{ui}$ in $\mathbf{\hat{R}_b}$ would be $r_b/|\mathbb{U}|$ times of that in $\mathbf{\hat{R}}$ on average.
Meanwhile, each row of $\mathbf{\hat{R}_b}$ would contain $c_b/|\mathbb{I}|\times k$ nonzero elements of $\mathbf{T}$ on average. % since each item is also independent.
Thus, optimizing $\mathcal{L}_{div}(\mathbf{\hat{R}_b})$ under the assumption that the model recommends $c_b/|\mathbb{I}|\times k $ items for each user also optimizes $\mathcal{L}_{div}(\mathbf{\hat{R}})$.
In this way, \method reduces the training time and the memory space requirement compared to directly optimizing $\mathcal{L}_{div}(\mathbf{\hat{R}})$.

\begin{savenotes}
\begin{table}[t]
	\centering
	\caption{%
		Summary of datasets.
	}
%	\vspace{-1mm}

	\begin{threeparttable}
	\resizebox{8.5cm}{!}{
	\begin{tabular}{lrrrr}
		\toprule
		\textbf{Dataset}
			& \textbf{Users}
			& \textbf{Items}
			& \textbf{Interactions}
			& \textbf{Density(\%)}\\
		\midrule
		Yelp-15\footnote{\url{https://www.yelp.com/dataset}}
			& 69,853 & 43,671 & 2,807,606 & 0.0920\\
		Gowalla-15\footnote{\url{https://snap.stanford.edu/data/loc-gowalla.html}}
			& 34,688 & 63,729 & 2,438,708 & 0.1111\\
		Epinions-15\footnote{\url{http://www.trustlet.org/downloaded_epinions.html}}
			& 5,531 & 4,286 & 186,995 & 0.7888\\
		MovieLens-10M\footnote{\url{https://grouplens.org/datasets/movielens/10m/}}
			& 69,878 & 10,677 & 10,000,054 &  1.3403\\
		MovieLens-1M\footnote{\url{https://grouplens.org/datasets/movielens/1m/}}
			& 6,040 & 3,706 & 1,000,209 &  4.4684\\
%		MovieLens-100K\footnote{\url{https://grouplens.org/datasets/movielens/100k/}}
%			& 943 & 1,682 & 100,000 &  6.3047\\
%		MovieLens-10M\tnote{2}
%			& 71,567 & 10,681 & 10,000,054 & 1.3082\\
%		Amazon Electronic(5-core)\tnote{4}
%			& 728,489 & 159, 729 & 6,737,580 & 0.0058\\
%		Filmtrust(15-core)\tnote{7}
%			& 1,508 & 2,071 & 35,497 & 1,1366\\
		\bottomrule
	\end{tabular}
	}
%	\begin{tablenotes} \footnotesize
%	\item[1]\url{https://grouplens.org/datasets/movielens/1m/}
%	\item[2]\url{http://www.trustlet.org/downloaded_epinions.html}
%	\item[3]\url{https://snap.stanford.edu/data/loc-gowalla.html}
%	\item[4]\url{https://www.yelp.com/dataset}
%%	\item[2]\url{https://grouplens.org/datasets/movielens/10m/}
%%	\item[4]\url{https://nijianmo.github.io/amazon/index.html}
%%	\item[7]\url{https://www.kaggle.com/abdelhakaissat/film-trust}
%	\end{tablenotes}
	\end{threeparttable}

%	\vspace{-2mm}
	\label{table:datasets}
\end{table}
\end{savenotes}

\section{Experiments}
\label{sec:experiments}

We perform experiments to answer the following questions:
\begin{itemize}
	\item[Q1.] \textbf{Diversity and accuracy (Section \ref{subsec:diversity_accuracy}).} Does \method show high aggregate-level diversity without sacrificing the accuracy of recommendation?
\item[Q2.] \textbf{Regularizer (Section \ref{subsec:ablation_study}).} How do the diversity regularization terms $Reg_{cov}$ and $Reg_{skew}$ of \method help improve the diversity performance of \method?
	\item[Q3.] \textbf{Training algorithm (Section \ref{subsec:exp_algs}).} Does the current design of training algorithm prevent \method from being trapped in bad local optima?
	\item[Q4.] \textbf{Unmasking mechanism (Section \ref{subsec:unmask}).} How does the unmasking mechanism affect the performance of \method?
\item[Q5.] \textbf{Mini-batch learning (Section \ref{subsec:batch_exp}).} Does the mini-batch learning technique maintain the performance of \method?
\end{itemize}

\subsection{Experimental Setup}
\label{subsec:experimental_setup}
We introduce our experimental setup including datasets, evaluation protocol, baseline approaches, evaluation metrics, and the training process.

\textbf{Datasets.}
We use five real-world rating datasets as summarized in Table~\ref{table:datasets}.
%\green{
%Each dataset consists of instances that represent user-item interactions with user id, item id, rating, and timestamp.
%We construct user-item interaction matrices with 0 or 1 entries indicating whether the user scored the item.
%}
%Each dataset consists of instances with user id, item id, rating, and timestamp.
%We consider each rating as an implicit feedback to construct a user-item interaction matrix.
We reduce sparse datasets such as Yelp, Epinions, and Gowalla to have only users and items which have interacted with at least $15$ others, which we call as $15$-core preprocessing.
MovieLens-10M and MovieLens-1M datasets~\cite{HarperK15} contain movie rating constructed by the GroupLens research group.
Yelp-15 contains 15-core restaurant rating data collected from a restaurant review site with the same name.
Epinions-15~\cite{KasperMD08} contains 15-core rating data of items purchased by users constructed from a general consumer review site.
Gowalla-15~\cite{ChoSL11} contains 15-core data of a friendship network of users constructed from a location-based social networking website where users share their check-in locations.
We remove the rating scores of Yelp-15, Epinions-15, MovieLens-10M, MovieLens-1M datasets and obtain user-item interaction data which indicate whether the user has rated the item or not.
%\green{We remove the rating scores to binarize user-item interactions in Yelp-15, Epinions-15, MovieLens-10M, MovieLens-1M datasets as in~\cite{RendleFGS09}.}
%\green{This dataset is an implicit feedback dataset, which contains only user-item interactions without ratings.}

\textbf{Evaluation Protocol.}
We employ \textit{leave-one-out} protocol~\cite{HeLZNHC17,ChenL0GZ19} where one of each user's interaction instances is removed for testing.
Our task is to recommend the item of the removed instance to each user.
From each listed dataset, one instance is selected from the instance list of each user and separated as an element in our test set.
If the dataset includes timestamp, the latest instance of each user is selected, and if not, test set instances are randomly sampled.

\textbf{Baselines.}
We compare \method with existing methods of aggregately diversified recommendation.
\begin{itemize}
	\item \textbf{Kwon.} Kwon et al.~\cite{AdomaviciusK12} is a reranking based approach which adjusts recommendation scores of items based on their frequencies to achieve aggregate level diversity.
	\item \textbf{Karakaya.} Karakaya et al.~\cite{KarakayaA18} replaces items on recommendation lists with infrequently recommended similar items.
	\item \textbf{Fairmatch.} Fairmatch~\cite{MansouryAPMB20} is a reranking based approach which utilizes a maximum flow problem to find important items.
	\item \textbf{UImatch.} UImatch~\cite{dong2021user} is one of the most recent method of achieving aggregate level diversity.
	This method assigns recommendation capacity to each item and constructs its recommendation list in a greedy manner based on a normalized user-item interaction score.
\end{itemize}

\textbf{Evaluation metrics.}
We evaluate the performance of the methods in two categories: accuracy and diversity. Accuracy metric compares the predicted ranks of recommended items to their true ranks, and diversity metrics evaluate aggregated diversity of the recommendation.
For each experiment, a list of recommendation to each user is created and evaluated by the following metrics.
\begin{itemize}
	\item \textbf{Accuracy.}
		\begin{itemize}			
			\item \textbf{nDCG@$k$.}
%			The normalized cumulative gain of relevance score when $k$ items are recommended to each user.
%			Relevance score of a recommended item can be 1 or 0, indicating whether the item is a ground truth item or not.
%			nDCG@$k$ of a top-$k$ recommendation list for a given user is defined as follows:
%			\begin{equation}
%				nDCG@k = Z_k\sum_{j=1}^k \frac{rel_j}{\log_{2}(j + 1)},
%			\end{equation}
%			where $rel_j$ is the relevance score of $j$th highly ranked item of a given list
%			and $Z_k$ is a normalizer to ensure that nDCG@$k$ of ideal recommendation is $1$.
%			$Z_k$ in our experiments is $1$ since we use \textit{leave-one-out} protocol so $rel_1=1$ and $rel_2=...=rel_k=0$ in the ideal recommendation list.
%			nDCG@$k$ considers higher ranks more importantly by dividing the relevance score of a highly ranked item with a lower value and that of a lowly ranked item with a higher value.
%			It can range from 0 to 1, and the accuracy of recommendation is considered the highest when it is 1.
			nDCG@$k$ measures the overall accuracy of the recommendation lists when $k$ items are recommended to each user.
			It considers highly ranked items more importantly than lowly ranked items to judge the recommendation accuracy.
			It ranges from 0 to 1, where the value 0 indicates the lowest accuracy and the value 1 represents the highest accuracy.
		\end{itemize}
	\item \textbf{Diversity.}
		\begin{itemize}
			\item \textbf{Coverage@$k$.}  The coverage of total recommendation lists when $k$ items are recommended to each user.
			\item \textbf{Entropy@$k$.} The entropy of total recommendation lists when $k$ items are recommended to each user.
			\item \textbf{Negative Gini index@$k$.} The negative value of the Gini index of total recommendation lists when $k$ items are recommended to each user.
		\end{itemize}
\end{itemize}

\textbf{Training Details.}
%
%We use BPR loss function for accuracy loss since it shows great performance in top-$k$ recommendations with implicit feedback~\cite{RendleFGS09}.
We first train the MF model until it shows the best accuracy.
Then, we apply each baseline method and \method on the trained MF model, and compare results.
We min-max normalize the recommendation scores for Kwon and Karakaya since they need prediction ratings in finite scale.
We use reverse prediction scheme and set $T_H=0.8, T_R=0.9$ for Kwon.
We use hyperparameters $t=[30, 50, 75, 100]$ and $\alpha=0.5$ for FairMatch.
We unmask $50$ items in Epinions-15 dataset,
$100$ items in ML-1M/ML-10M datasets,
and $500$ items in Gowalla-15/Yelp-15 datasets.
We apply the mini-batch learning technique (Section 3.3.3) to train \method in ML-10M, Gowalla-15, and Yelp-15 datasets with batch size $r_b=5000$ and $c_b=5000$.
All the models are trained with Adam optimizer with learning rate $0.001$, $l_2$ regularization coefficient $0.0001$, $\beta_1=0.9$, and $\beta_2=0.999$.
We set $k=5$ to generate recommendation lists.

\subsection{Diversity and Accuracy (Q1)}
\label{subsec:diversity_accuracy}

\begin{figure*}[ht]
	\centering
	\subfigure{\includegraphics[width=0.65\textwidth]{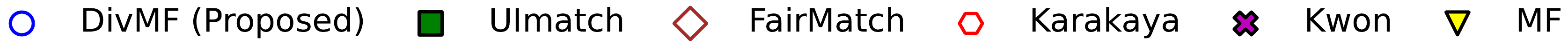}}\\\vspace{-3mm}
	\setcounter{subfigure}{0}
	\subfigure
	{\includegraphics[width=1\textwidth]{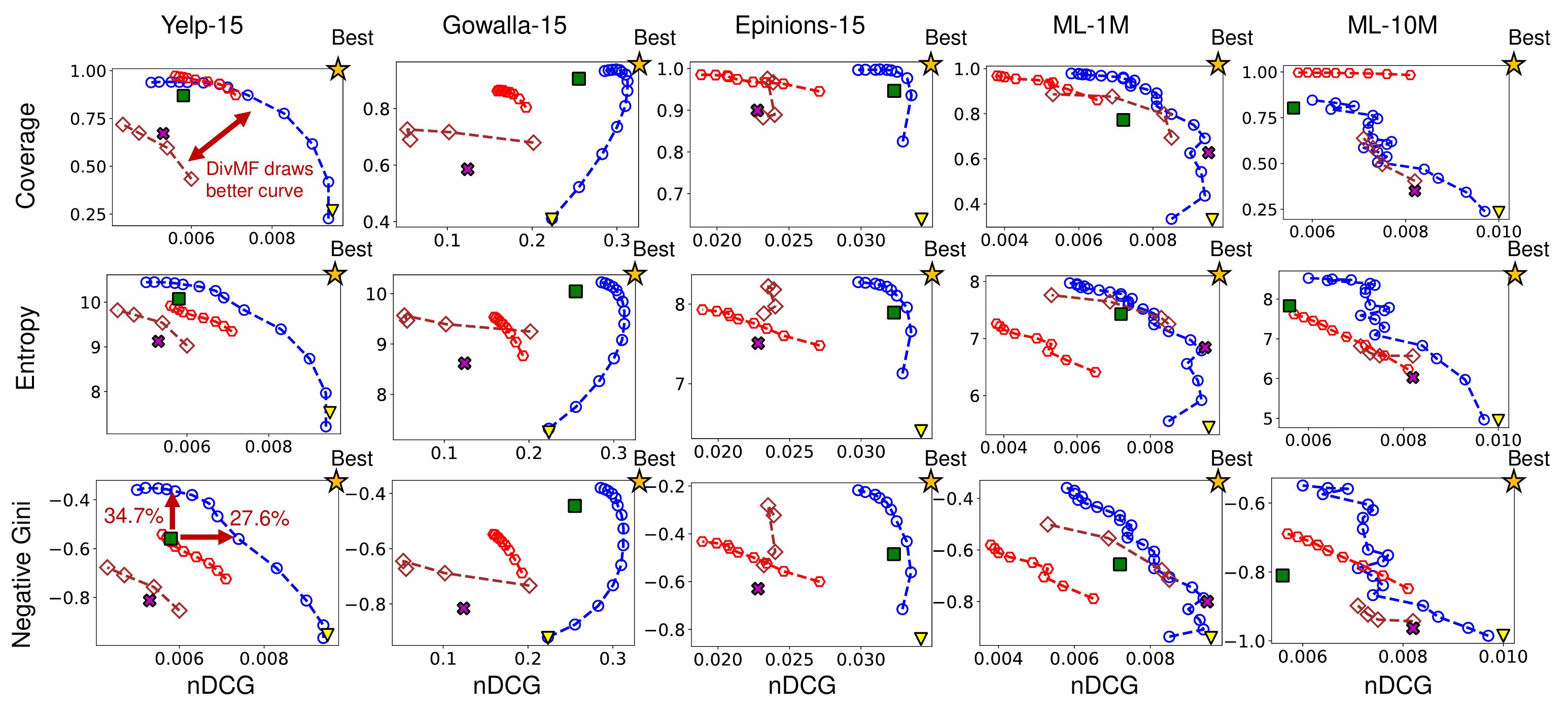}}
	\vspace{-4mm}
	\caption{
		Accuracy-diversity trade-off curves of top-$5$ recommendations on five real-world datasets.
		\method draws the best trade-off curves which are closer to the best points.
		In other words, \method achieves the highest aggregate-level diversity while sacrificing minimal accuracy.
	}
	\label{FIG:performance}
\end{figure*}

We observe the change of accuracies and diversities of \method and the competitors in five real-world datasets.
Figure~\ref{FIG:performance} shows the results of each competitor in each dataset.
We have two main observations.
First, \method achieves better accuracy in similar levels of diversity and better diversity in similar levels of accuracy compared to other baselines, including the previous state-of-the-art methods for aggregately diversified recommendation.
Second, \method shows the best overall diversity considering the balance of coverage, entropy, and Gini index.
Karakaya shows the best coverage on ML-10M, because random walk easily finds new items in a dense dataset.
However, Karakaya's recommendation results are less diversified than \method in terms of entropy and Gini index.
This means that previously rarely recommended items in Karakaya's results are recommended too few times to meaningfully increase the aggregate-level diversity.
Unlike Karakaya, \method also improves entropy and Gini index, which indicate that previously rarely recommended items are also frequently recommended.
Overall, our proposed \method is closest to the `best' point with the maximum accuracy and diversity.
%
%\blue{
%All methods achieve better diversity than MF since MF does not aim to maximize aggregate-level diversity.
%Karakaya shows the best coverage on ML-10M, because random walk easily finds new items in dense dataset.
%However, Karakaya's recommendation results are less diversified than \method in terms of entropy and gini index.
%This means that rarely recommended items in Karakaya's results are recommended too few amounts of times to meaningfully increase the aggregate-level diversity.
%Overall, our proposed \method is closest to the `best' point with the maximum accuracy and diversity;
%\method achieves better accuracy in similar levels of diversity and better diversity in similar levels of accuracy compared to other baselines, including the previous state-of-the-art methods for aggregately diversified recommendation.
%}

\subsection{Regularizer (Q2)}
\label{subsec:ablation_study}

\begin{figure}[h]
	\centering
	 \subfigure{\includegraphics[width=0.9\columnwidth]{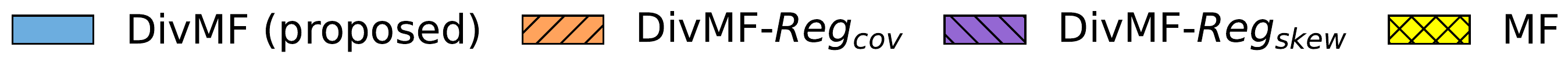}}\\\vspace{-3mm}
	 \setcounter{subfigure}{0}
	 \subfigure[Coverage]{\includegraphics[width=0.32\columnwidth]{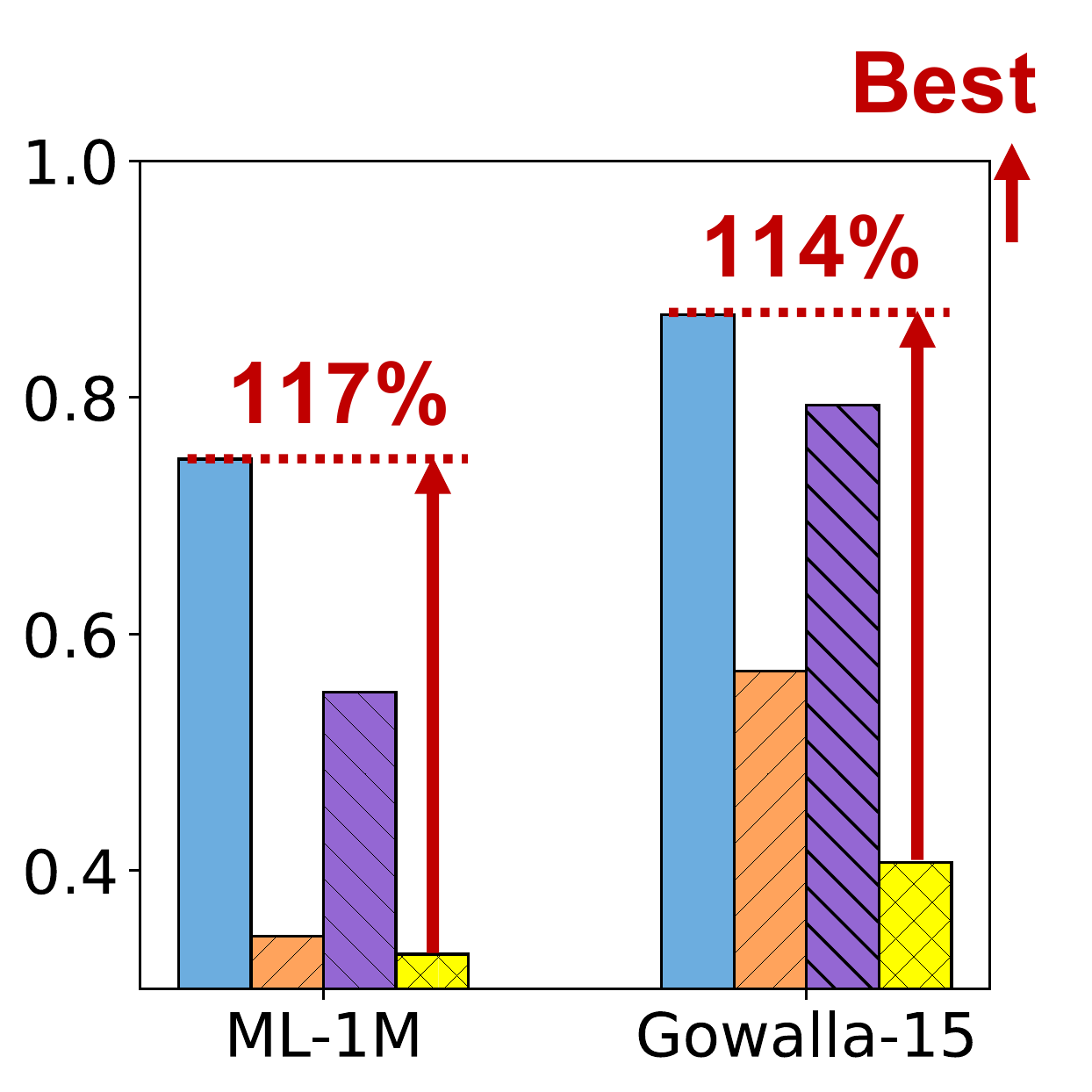}\label{fig:ablation_ml1m_cov}}
	 \subfigure[Entropy]{\includegraphics[width=0.31\columnwidth]{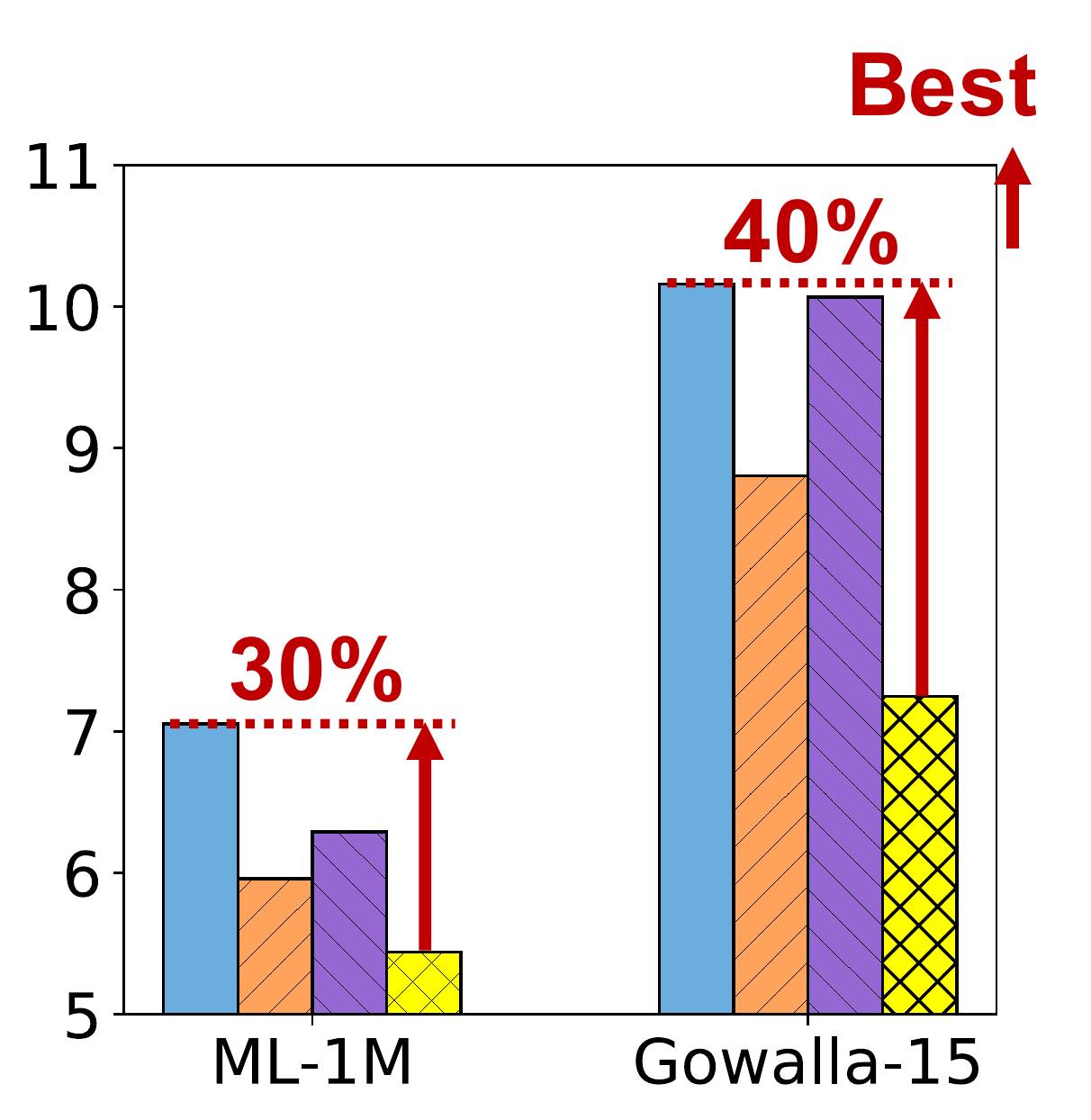}\label{fig:ablation_ml1m_ent}}
	 \subfigure[Negative Gini index]{\includegraphics[width=0.335\columnwidth]{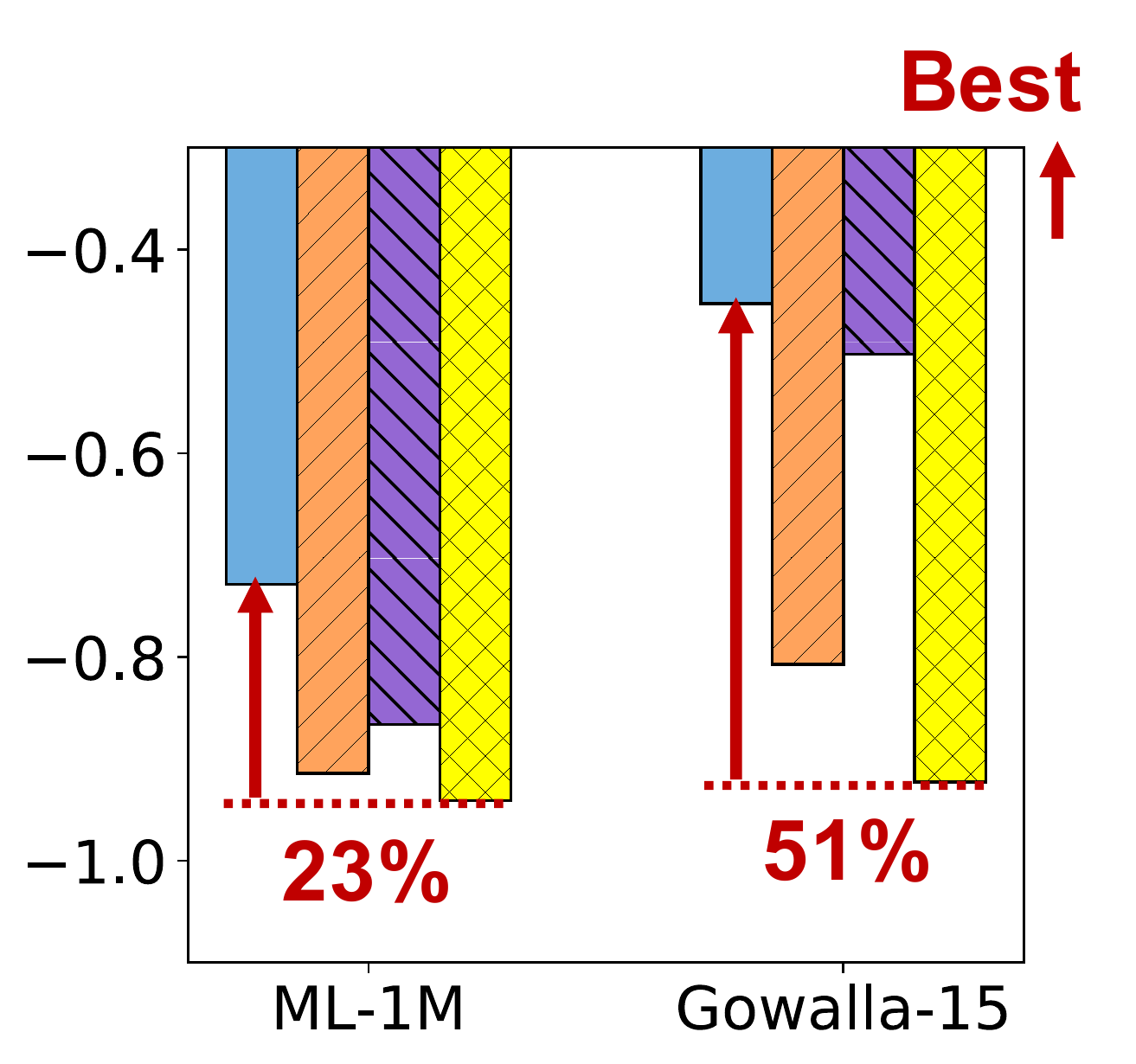}\label{fig:ablation_ml1m_gini}}
	 \vspace{-4mm}
	\caption{
		The change of diversity metrics compared to MF when nDCG is decreased by five percent.
		\method shows the best diversities with both coverage regularizer and skewness regularizer.
	}
	\label{FIG:ablation}
\end{figure}

We verify the impact of coverage regularizer and skewness regularizer. % in Figure~\ref{FIG:ablation}.
We compare improvements in aggregate-level diversity of three models: \method, \method-$Reg_{skew}$, and \method-$Reg_{cov}$ compared to MF in the ML-1M and Gowalla-15 datasets.
\method is our proposed method which utilizes both coverage regularizer and skewness regularizer.
\method-$Reg_{skew}$ and \method-$Reg_{cov}$ do not use the skewness regularizer and the coverage regularizer, respectively, in training process.
For fair comparison of performance, we measure the aggregate-level diversities of models when their nDCGs are equally dropped by five percent compared to MF since models achieve higher diversity with lower accuracy.

Figure~\ref{FIG:ablation} shows the results.
We have two main observations.
First, \method shows the best results in all metrics.
This shows that the skewness regularizer combined with the coverage regularizer supports maximizing entropy and minimizing Gini index as shown in Theorem \ref{thm:frequency}.
Second, \method-$Reg_{cov}$ shows worse results than \method-$Reg_{skew}$.
This is because the skewness regularizer alone does not consider the balance between items at the aggregate level,
while the coverage regularizer adjusts the distribution of recommended items at the aggregate level as shown in Theorem \ref{thm:coverage}.

%\red{
%\method-$Reg_{cov}$ does not improve diversity metrics because the skewness regularizer does not directly impact the aggregate-level diversity.
%\method-$Reg_{skew}$ noticeably improves diversity metrics since the coverage regularizer impacts diversity as described in Theorem \ref{thm:coverage}.
%%
%In addition, the skewness regularizer also helps the model find new items which improves the coverage.
%Thus, \method shows better diversity compared to \method-$Reg_{skew}$ in all the diversity metrics: entropy, Gini index, and coverage.
%}

\subsection{Training Algorithm (Q3)}
\label{subsec:exp_algs}

\begin{figure}[h]
	\centering
	\subfigure{\includegraphics[width=0.8\columnwidth]{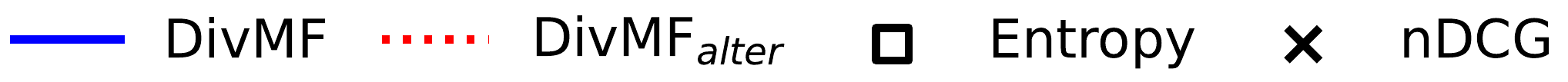}}\\\vspace{-3mm}
	\setcounter{subfigure}{0}
	\subfigure{\includegraphics[width=0.8\columnwidth]{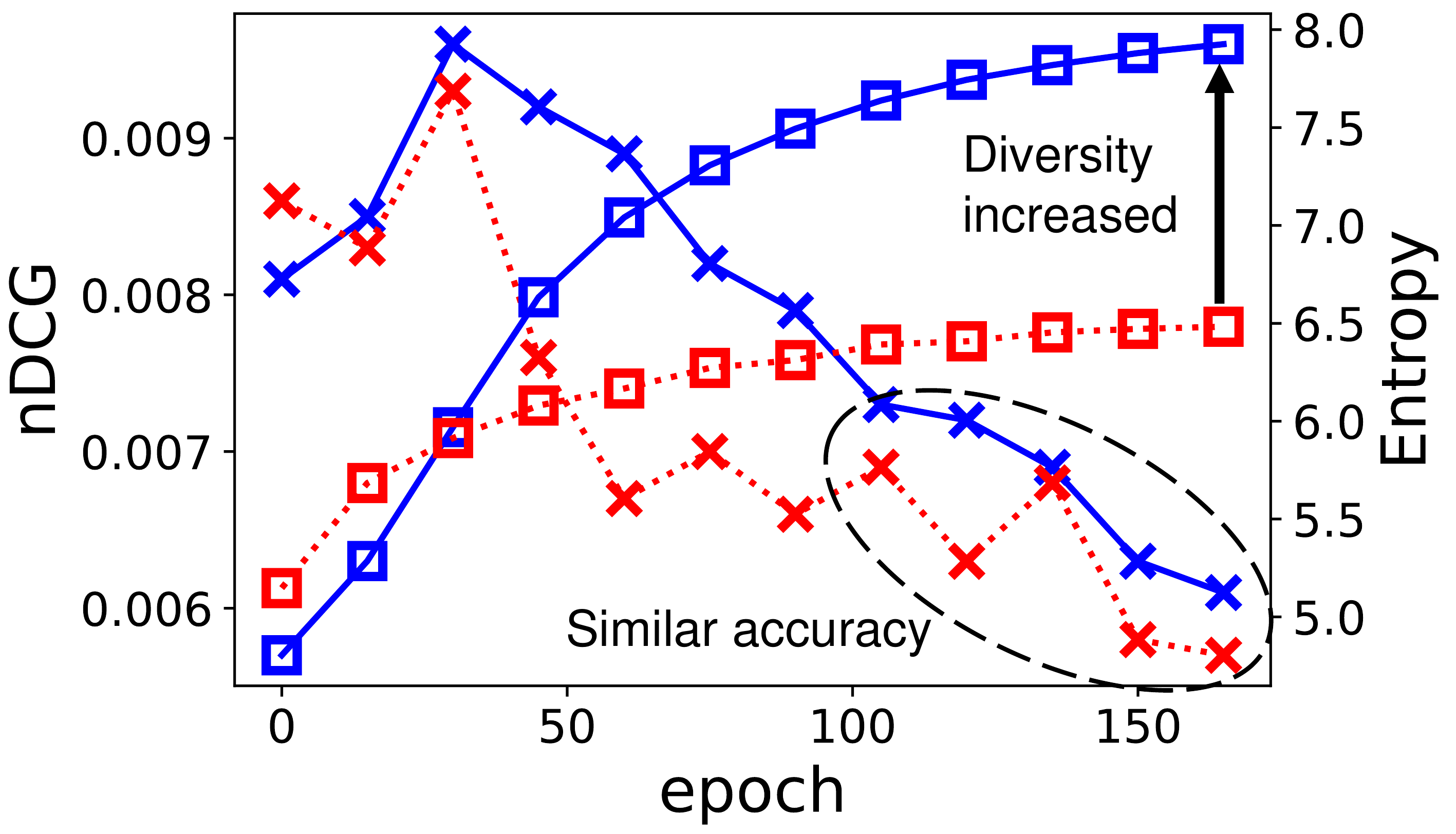}\label{fig:algs_ml}}
	\vspace{-4mm}
	\caption{
		Change of nDCG and entropy of \method and \method$_{alter}$ during training.
		\method achieves a significantly higher diversity than \method$_{alter}$ when their levels of accuracy are similar since \method avoids being trapped in bad local optima.
	}
	\label{FIG:algs}
\end{figure}

We verify the effectiveness of our algorithm (Algorithm~\ref{al:training}) to train the model in Figure~\ref{FIG:algs}.
We compare training processes of \method and \method$_{alter}$ on ML-1M dataset.
\method$_{alter}$ alternately optimizes the accuracy loss and the diversity regularizer unlike \method,
which optimizes the diversity regularizer after optimizing the accuracy loss.

Figure~\ref{FIG:algs} shows that \method with Algorithm~\ref{al:training} successfully prevents the model from being trapped in bad local optima.
The increased entropy of \method$_{alter}$ until convergence is significantly smaller than that of \method, while the accuracies of the two models are at a similar level.

\subsection{Unmasking Mechanism (Q4)}
\label{subsec:unmask}
To find optimal unmasking policy for \method,
we compare three different unmasking schemes: no unmasking scheme, unmasking top-$k+n$ scheme, and unmasking random $n$ scheme.
$n$ is a hyperparameter that controls the number of additional unmasked items.
Unmasking top-$k+n$ scheme unmasks total $k+n$ items with high prediction scores. %instead of top-$k$ items for each user.
Unmasking random $n$ scheme unmasks randomly selected $n$ items. %other than top-$k$ for each user.
We set $n=100$ for each scheme since this setting shows the best overall performance.

Figure~\ref{FIG:unmask_scheme} shows the trade-off curves of Top-$5$ recommendation in each unmasking scheme of \method on ML-1M dataset.
We have two observations.
First, \method fails to achieve high aggregate-level diversity without unmasking mechanism.
The model with no unmasking scheme reaches the limit on diversity during training while other models with an unmasking mechanism further increase both coverage and negative Gini index.
Second, it is better to unmask high scored items than random items to achieve better coverage.
The reason is that there is a higher chance to find proper items to recommend from the top of the recommendation list than from random items.

\begin{figure}[t]
	\centering
	\subfigure{\includegraphics[width=0.6\columnwidth]{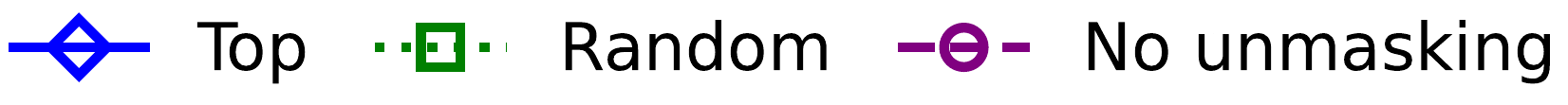}}\\\vspace{-2mm}
	\setcounter{subfigure}{0}
	\subfigure[Coverage]{\includegraphics[width=0.48\columnwidth]{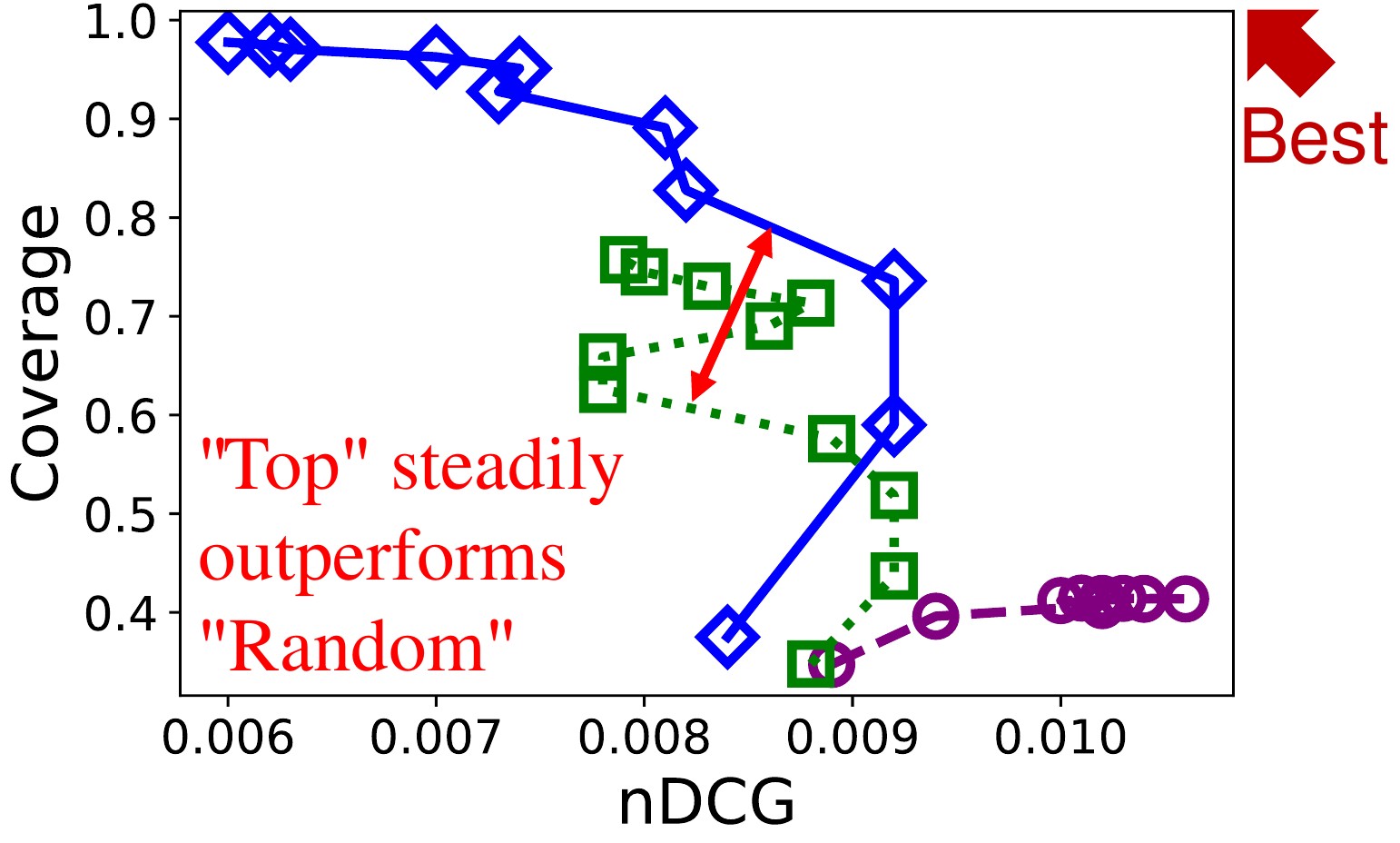}\label{fig:unmask_ml1m_cov}}
	\subfigure[Negative Gini index]{\includegraphics[width=0.48\columnwidth]{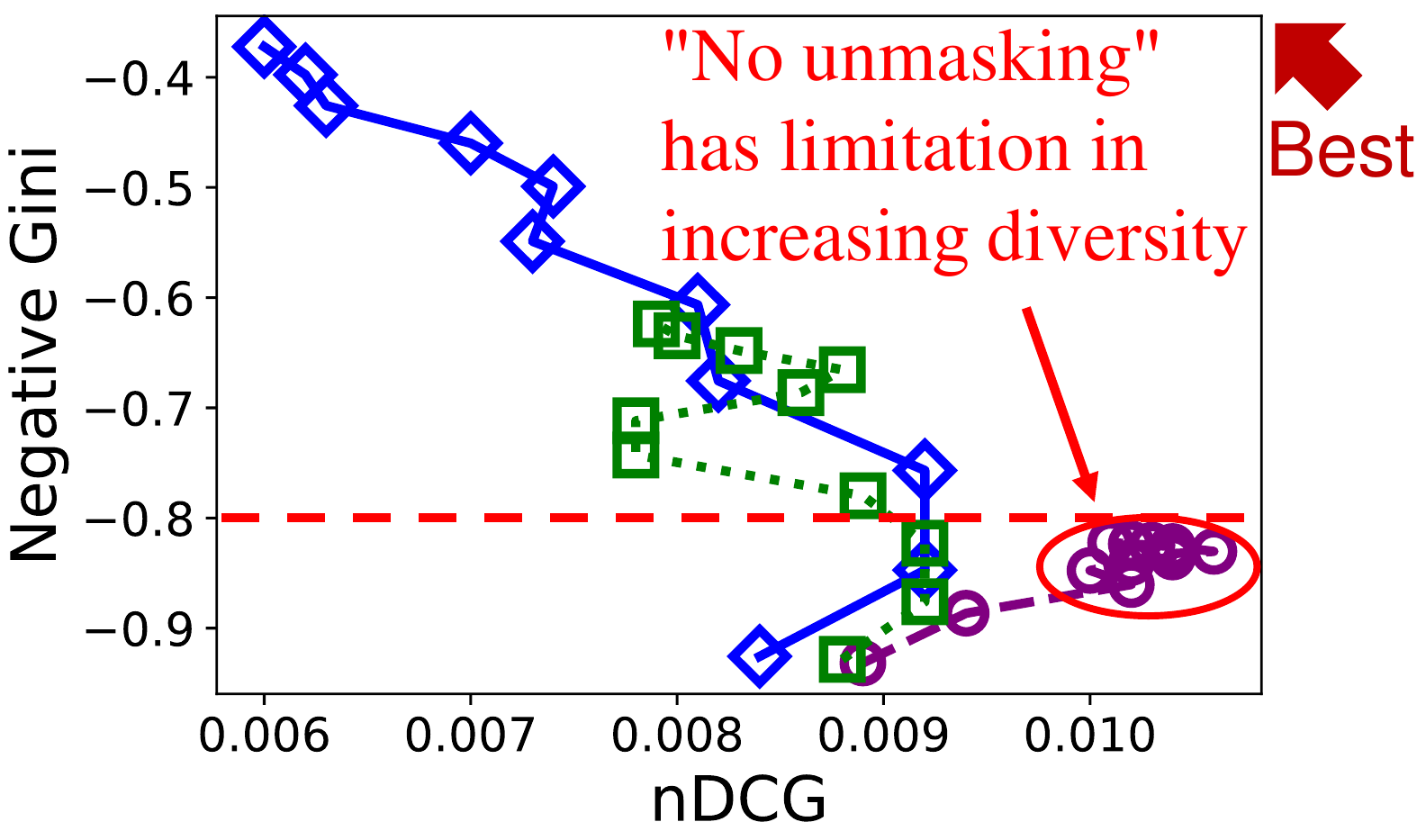}\label{fig:unmask_ml1m_ent}}
	%	 \subfigure{\includegraphics[width=0.48\columnwidth]{./FIG/unmask/unmask_legend}\label{fig:unmask_legend}}
		 \vspace{-4mm}
	\caption{
		Trade-off curves of different unmasking policies.
		It is better to unmask high scored items rather than random items or no items.
	}
	\label{FIG:unmask_scheme}
\end{figure}

\subsection{Mini-Batch Learning (Q5)}
\label{subsec:batch_exp}

\begin{figure}[h]
	\vspace{-5mm}
	\centering
	\subfigure{\includegraphics[width=0.75\columnwidth]{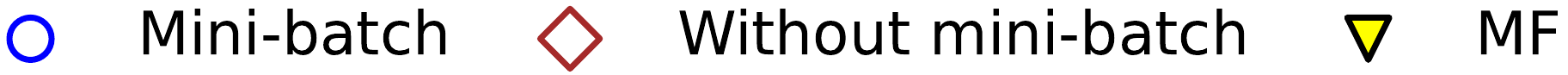}}\\\vspace{-3mm}
	\setcounter{subfigure}{0}
	\subfigure[Gowalla-15]{\includegraphics[width=0.49\columnwidth]{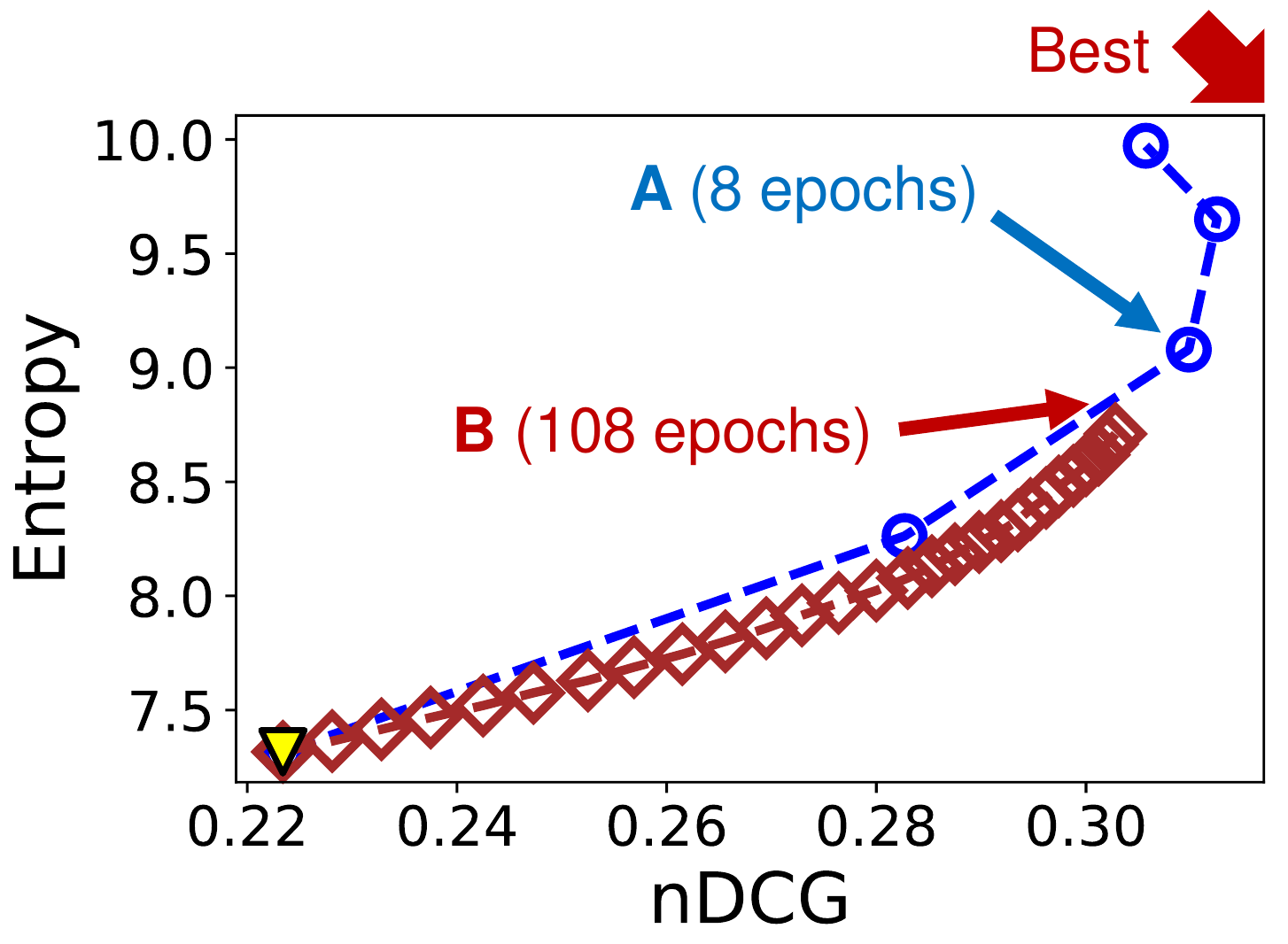}\label{fig:batch_gowa}}
	\subfigure[Yelp-20]{\includegraphics[width=0.49\columnwidth]{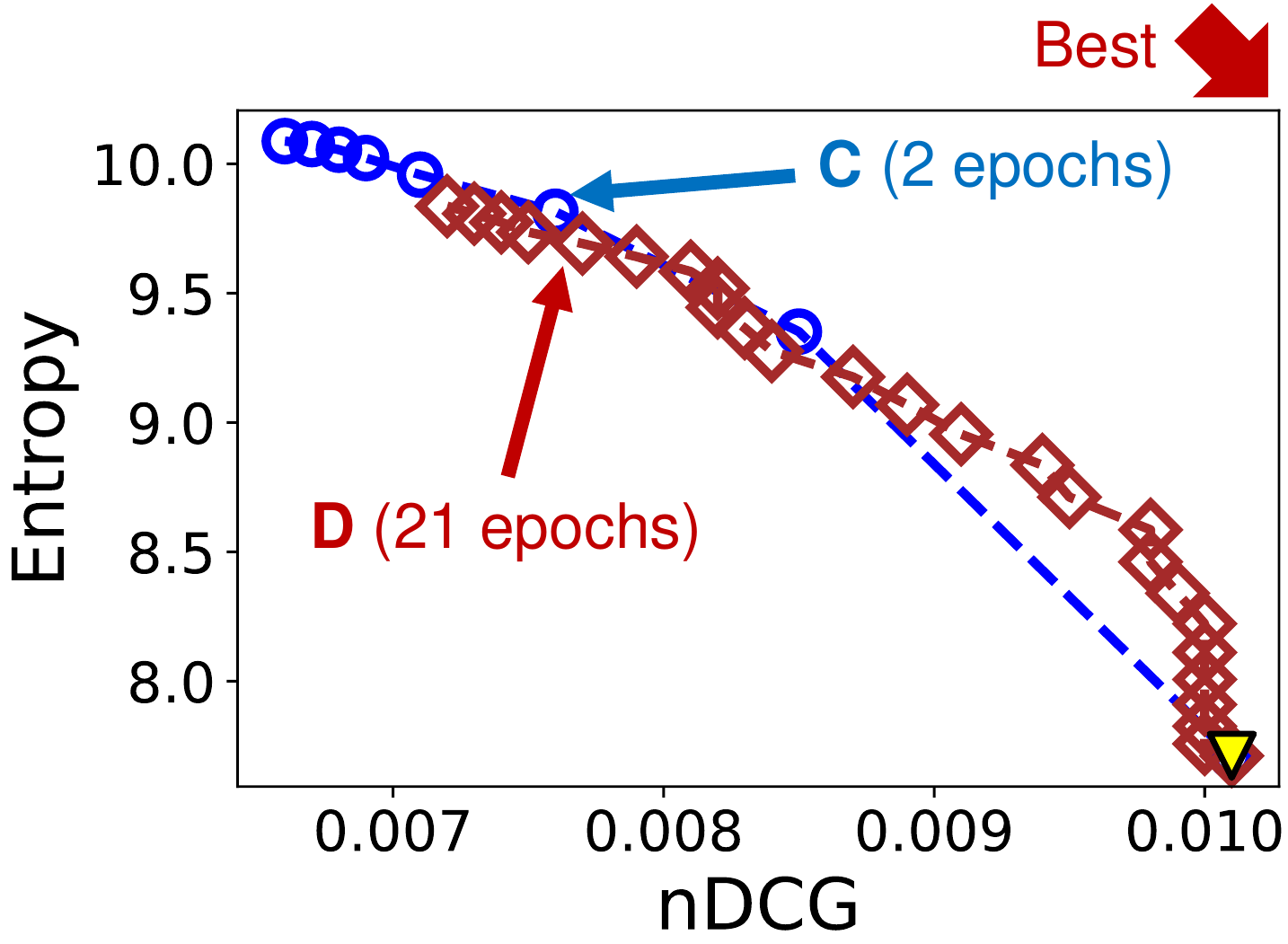}\label{fig:batch_yelp}}
	\vspace{-4mm}
	\caption{
		Trade-off curves of \method with and without the mini-batch learning technique in Gowalla-15 and Yelp-20 datasets.
	\method with the mini-batch learning requires less than 10\% of epochs required by that without the mini-batch learning to achieve comparable performance, enabling fast training.
	}
%	\vspace{-5mm}
	\label{FIG:batch_plot}
\end{figure}

We compare results of \method with and without the mini-batch learning technique (Section~\ref{subsec:batch}) in large datasets such as Gowalla-15 (15-core preprocessed version of Gowalla) and Yelp-20 (a 20-core preprocessed version of Yelp) to confirm that the mini-batch technique does not decrease the recommendation quality of \method while shortening the training time.
\method without the mini-batch technique is trained with CPU since the model is too large to be stored on the GPU.
We use Yelp-20 instead of Yelp-15 to fit the model in memory capacity.
Yelp-20 dataset contains 2,438,708 instances of 41,774 users and 30,037 items.
We unmask 500 items in Gowalla-15 and 100 items in Yelp-20 dataset.
We use batch size $r_b=5000$ and $c_b=5000$ for both datasets.

Figure~\ref{FIG:batch_plot} shows the accuracy-diversity trade-off curves of Top-$5$ recommendations of \method with and without the mini-batch learning technique.
All curves start from the result of MF which is marked as a yellow triangle.
We increase the $n_{ep}$ by 4 for every dot in Gowalla-15 and by 1 for every dot in Yelp-20.
We have two observations.
First, mini-batch learning technique enables fast training.
\method with the mini-batch learning technique requires less than 10\% of epochs to achieve similar performance as that of \method without the mini-batch learning. 
In Yelp-20 dataset, the performance of \method with the mini-batch learning technique at epoch 2 (dot C) and that of \method without the mini-batch learning technique at epoch 21 (dot D) are similar. 
Second, mini-batch learning also improves diversity of recommendation.
Note that in Gowalla-15 dataset, the entropy of recommendation result by \method with the mini-batch learning technique is drastically improved compared to that of without mini-batch learning.
In summary, the mini-batch learning technique accelerates the training process of \method and additionally improves the performance.

\section{Related Works}
\label{sec:related}
We summarize previous works related to aggregately diversified recommendation: individually diversified recommendation, fair recommendation, popularity debiased recommendation, and capacity constrained recommendation.

\textbf{Individually diversified recommendation.}
Individually diversified recommendation recommends diversified items to each user~\cite{WuLMZT19}.
Maximizing individual diversity can maximize item novelty in each user's view, but it may recommend already known items in overall recommendation list for all users.
Thus, maximizing individual-level diversity does not guarantee the improvement in aggregate-level diversity~\cite{AdomaviciusK12}.
%Aggregately diversified recommendation differs from \blue{individually} diversified recommendation \blue{since} it solves the problem of long-tailed items.

\textbf{Fair recommendation.}
Fair recommendation aims to design an algorithm that makes fair predictions devoid of discrimination~\cite{Gajane17}.
There are several categories in fairness recommendation based on the definition of fairness.
For instance, group fairness recommendation considers that a model is fair when items from different groups have similar chance of being recommended~\cite{pager2008sociology, zehlike2017fa, ekstrand2021exploring}.
On the other hand, individual fairness recommendation considers that a model is fair if items with similar attributes are equally recommended~\cite{patro2020fairrec}.
Aggregately diversified recommendation
does not require any groups or attributes to be defined,
which is the main difference compared to fair recommendation.

\textbf{Popularity debiased recommendation.}
Popularity debiased recommendation aims to improve the quality of recommendation for long-tail items.
Traditional recommender systems tend to show poor accuracy in recommending infrequently appeared items compared to frequently appeared items because of the skewness in dataset~\cite{steck2011item}.
There are researches to eliminate the popularity bias to achieve high accuracy in recommending long-tail items as well as popular items~\cite{wei2021model, ferraro2019music}.
Aggregately diversified recommendation focuses on increasing the frequency of long tail items in recommendation lists in addition to recommending them accurately,
which is the main difference from popularity debiased recommendation.

\textbf{Capacity constrained recommendation.}
Capacity constrained recommendation aims to provide good recommendations while taking the limited resources into consideration~\cite{ChristakopoulouKB17}.
Recommendations are distributed to more diverse items when some popular items do not have enough capacity, hence aggregated diversity of recommendation may be achieved.
 However, if there is enough capacity to most items, long-tailed items remain not recommended.
 Also, the experiment for capacity constrained recommendation is limited because there have been no datasets with capacity of items listed.
 Previous works set capacities of items randomly, and thus do not represent the real-world scenario~\cite{ChristakopoulouKB17}.
 Aggregately diversified recommendation
 does not require the capacity of items as input data,
 which is the main difference compared to capacity constrained recommendation.

\section{Conclusion}
\label{sec:conclusion}

We propose \method, a matrix factorization method for aggregately diversified recommendation, which maximizes aggregate diversity while sacrificing minimal accuracy.
\method regularizes an MF model with coverage and skewness regularizers which maximizes the coverage and entropy, respectively, of top-$k$ recommendation in the training process.
Furthermore, \method effectively optimizes the model with an unmasking mechanism 
and carefully designed mini-batch learning. 
Experiments on five real-world datasets show that \method achieves the state-of-the-art performance in aggregately diversified recommendation outperforming the best competitor with
up to 34.7\% reduced Gini index in the similar level of accuracy and up to 27.6\% higher nDCG in the similar level of diversity.
Future works include extending \method to other fields such as individually diversified recommendation and fair recommendation. 

\clearpage

\bibliographystyle{ACM-Reference-Format}
\bibliography{paper}

%%% -*-BibTeX-*-
%%% Do NOT edit. File created by BibTeX with style
%%% ACM-Reference-Format-Journals [18-Jan-2012].

\begin{thebibliography}{25}

%%% ====================================================================
%%% NOTE TO THE USER: you can override these defaults by providing
%%% customized versions of any of these macros before the \bibliography
%%% command.  Each of them MUST provide its own final punctuation,
%%% except for \shownote{}, \showDOI{}, and \showURL{}.  The latter two
%%% do not use final punctuation, in order to avoid confusing it with
%%% the Web address.
%%%
%%% To suppress output of a particular field, define its macro to expand
%%% to an empty string, or better, \unskip, like this:
%%%
%%% \newcommand{\showDOI}[1]{\unskip}   % LaTeX syntax
%%%
%%% \def \showDOI #1{\unskip}           % plain TeX syntax
%%%
%%% ====================================================================

\ifx \showCODEN    \undefined \def \showCODEN     #1{\unskip}     \fi
\ifx \showDOI      \undefined \def \showDOI       #1{#1}\fi
\ifx \showISBNx    \undefined \def \showISBNx     #1{\unskip}     \fi
\ifx \showISBNxiii \undefined \def \showISBNxiii  #1{\unskip}     \fi
\ifx \showISSN     \undefined \def \showISSN      #1{\unskip}     \fi
\ifx \showLCCN     \undefined \def \showLCCN      #1{\unskip}     \fi
\ifx \shownote     \undefined \def \shownote      #1{#1}          \fi
\ifx \showarticletitle \undefined \def \showarticletitle #1{#1}   \fi
\ifx \showURL      \undefined \def \showURL       {\relax}        \fi
% The following commands are used for tagged output and should be
% invisible to TeX
\providecommand\bibfield[2]{#2}
\providecommand\bibinfo[2]{#2}
\providecommand\natexlab[1]{#1}
\providecommand\showeprint[2][]{arXiv:#2}

\bibitem[\protect\citeauthoryear{Adomavicius and Kwon}{Adomavicius and
  Kwon}{2012}]%
        {AdomaviciusK12}
\bibfield{author}{\bibinfo{person}{Gediminas Adomavicius} {and}
  \bibinfo{person}{YoungOk Kwon}.} \bibinfo{year}{2012}\natexlab{}.
\newblock \showarticletitle{Improving Aggregate Recommendation Diversity Using
  Ranking-Based Techniques}.
\newblock \bibinfo{journal}{\emph{IEEE Transactions on Knowledge and Data
  Engineering}}  \bibinfo{volume}{24} (\bibinfo{year}{2012}).
\newblock


\bibitem[\protect\citeauthoryear{Adomavicius and Kwon}{Adomavicius and
  Kwon}{2014}]%
        {AdomaviciusK14}
\bibfield{author}{\bibinfo{person}{Gediminas Adomavicius} {and}
  \bibinfo{person}{YoungOk Kwon}.} \bibinfo{year}{2014}\natexlab{}.
\newblock \showarticletitle{Optimization-Based Approaches for Maximizing
  Aggregate Recommendation Diversity}.
\newblock \bibinfo{journal}{\emph{INFORMS Journal on Computing}}
  \bibinfo{volume}{26} (\bibinfo{year}{2014}).
\newblock


\bibitem[\protect\citeauthoryear{Chen, Liu, He, Gao, and Zheng}{Chen
  et~al\mbox{.}}{2019}]%
        {ChenL0GZ19}
\bibfield{author}{\bibinfo{person}{Liang Chen}, \bibinfo{person}{Yang Liu},
  \bibinfo{person}{Xiangnan He}, \bibinfo{person}{Lianli Gao}, {and}
  \bibinfo{person}{Zibin Zheng}.} \bibinfo{year}{2019}\natexlab{}.
\newblock \showarticletitle{Matching User with Item Set: Collaborative Bundle
  Recommendation with Deep Attention Network}. In
  \bibinfo{booktitle}{\emph{IJCAI}}.
\newblock


\bibitem[\protect\citeauthoryear{Cho, Myers, and Leskovec}{Cho
  et~al\mbox{.}}{2011}]%
        {ChoSL11}
\bibfield{author}{\bibinfo{person}{Eunjoon Cho}, \bibinfo{person}{Seth~A.
  Myers}, {and} \bibinfo{person}{Jure Leskovec}.}
  \bibinfo{year}{2011}\natexlab{}.
\newblock \showarticletitle{Friendship and Mobility: User Movement in
  Location-Based Social Networks}. In \bibinfo{booktitle}{\emph{Proceedings of
  the 17th ACM SIGKDD International Conference on Knowledge Discovery and Data
  Mining}}. \bibinfo{publisher}{Association for Computing Machinery},
  \bibinfo{pages}{1082--1090}.
\newblock


\bibitem[\protect\citeauthoryear{Christakopoulou, Kawale, and
  Banerjee}{Christakopoulou et~al\mbox{.}}{2017}]%
        {ChristakopoulouKB17}
\bibfield{author}{\bibinfo{person}{Konstantina Christakopoulou},
  \bibinfo{person}{Jaya Kawale}, {and} \bibinfo{person}{Arindam Banerjee}.}
  \bibinfo{year}{2017}\natexlab{}.
\newblock \showarticletitle{Recommendation under Capacity Constraints}.
\newblock \bibinfo{journal}{\emph{ArXiv}}  \bibinfo{volume}{abs/1701.05228}
  (\bibinfo{year}{2017}).
\newblock


\bibitem[\protect\citeauthoryear{Dong, Xie, and Li}{Dong et~al\mbox{.}}{2021}]%
        {dong2021user}
\bibfield{author}{\bibinfo{person}{Qiang Dong}, \bibinfo{person}{Shuang-Shuang
  Xie}, {and} \bibinfo{person}{Wen-Jun Li}.} \bibinfo{year}{2021}\natexlab{}.
\newblock \showarticletitle{User-Item Matching for Recommendation Fairness}.
\newblock \bibinfo{journal}{\emph{IEEE Access}}  \bibinfo{volume}{9}
  (\bibinfo{year}{2021}), \bibinfo{pages}{130389--130398}.
\newblock


\bibitem[\protect\citeauthoryear{Ekstrand and Kluver}{Ekstrand and
  Kluver}{2021}]%
        {ekstrand2021exploring}
\bibfield{author}{\bibinfo{person}{Michael~D Ekstrand} {and}
  \bibinfo{person}{Daniel Kluver}.} \bibinfo{year}{2021}\natexlab{}.
\newblock \showarticletitle{Exploring author gender in book rating and
  recommendation}.
\newblock \bibinfo{journal}{\emph{User modeling and user-adapted interaction}}
  \bibinfo{volume}{31}, \bibinfo{number}{3} (\bibinfo{year}{2021}),
  \bibinfo{pages}{377--420}.
\newblock


\bibitem[\protect\citeauthoryear{Ferraro}{Ferraro}{2019}]%
        {ferraro2019music}
\bibfield{author}{\bibinfo{person}{Andres Ferraro}.}
  \bibinfo{year}{2019}\natexlab{}.
\newblock \showarticletitle{Music cold-start and long-tail recommendation: bias
  in deep representations}. In \bibinfo{booktitle}{\emph{Proceedings of the
  13th ACM Conference on Recommender Systems}}. \bibinfo{pages}{586--590}.
\newblock


\bibitem[\protect\citeauthoryear{Gajane}{Gajane}{2017}]%
        {Gajane17}
\bibfield{author}{\bibinfo{person}{Pratik Gajane}.}
  \bibinfo{year}{2017}\natexlab{}.
\newblock \showarticletitle{On formalizing fairness in prediction with machine
  learning}.
\newblock \bibinfo{journal}{\emph{ArXiv}}  \bibinfo{volume}{abs/1710.03184}
  (\bibinfo{year}{2017}).
\newblock


\bibitem[\protect\citeauthoryear{Harper and Konstan}{Harper and
  Konstan}{2015}]%
        {HarperK15}
\bibfield{author}{\bibinfo{person}{F.~Maxwell Harper} {and}
  \bibinfo{person}{Joseph~A. Konstan}.} \bibinfo{year}{2015}\natexlab{}.
\newblock \showarticletitle{The MovieLens Datasets: History and Context}.
\newblock \bibinfo{journal}{\emph{ACM Trans. Interact. Intell. Syst.}}
  \bibinfo{volume}{5}, \bibinfo{number}{4}, Article \bibinfo{articleno}{19}
  (\bibinfo{date}{dec} \bibinfo{year}{2015}), \bibinfo{numpages}{19}~pages.
\newblock


\bibitem[\protect\citeauthoryear{He, Liao, Zhang, Nie, Hu, and Chua}{He
  et~al\mbox{.}}{2017}]%
        {HeLZNHC17}
\bibfield{author}{\bibinfo{person}{Xiangnan He}, \bibinfo{person}{Lizi Liao},
  \bibinfo{person}{Hanwang Zhang}, \bibinfo{person}{Liqiang Nie},
  \bibinfo{person}{Xia Hu}, {and} \bibinfo{person}{Tat{-}Seng Chua}.}
  \bibinfo{year}{2017}\natexlab{}.
\newblock \showarticletitle{Neural Collaborative Filtering}. In
  \bibinfo{booktitle}{\emph{WWW}}. \bibinfo{publisher}{{ACM}}.
\newblock


\bibitem[\protect\citeauthoryear{Karakaya and Aytekin}{Karakaya and
  Aytekin}{2018}]%
        {KarakayaA18}
\bibfield{author}{\bibinfo{person}{Mahmut Karakaya} {and}
  \bibinfo{person}{Tevfik Aytekin}.} \bibinfo{year}{2018}\natexlab{}.
\newblock \showarticletitle{Effective methods for increasing aggregate
  diversity in recommender systems}.
\newblock \bibinfo{journal}{\emph{Knowledge and Information Systems}}
  \bibinfo{volume}{56} (\bibinfo{date}{08} \bibinfo{year}{2018}).
\newblock
\urldef\tempurl%
\url{https://doi.org/10.1007/s10115-017-1135-0}
\showDOI{\tempurl}


\bibitem[\protect\citeauthoryear{Ko, Lee, Park, and Choi}{Ko
  et~al\mbox{.}}{2022}]%
        {ko2022survey}
\bibfield{author}{\bibinfo{person}{Hyeyoung Ko}, \bibinfo{person}{Suyeon Lee},
  \bibinfo{person}{Yoonseo Park}, {and} \bibinfo{person}{Anna Choi}.}
  \bibinfo{year}{2022}\natexlab{}.
\newblock \showarticletitle{A survey of recommendation systems: recommendation
  models, techniques, and application fields}.
\newblock \bibinfo{journal}{\emph{Electronics}} \bibinfo{volume}{11},
  \bibinfo{number}{1} (\bibinfo{year}{2022}), \bibinfo{pages}{141}.
\newblock


\bibitem[\protect\citeauthoryear{Koren, Bell, and Volinsky}{Koren
  et~al\mbox{.}}{2009}]%
        {koren2009matrix}
\bibfield{author}{\bibinfo{person}{Yehuda Koren}, \bibinfo{person}{Robert
  Bell}, {and} \bibinfo{person}{Chris Volinsky}.}
  \bibinfo{year}{2009}\natexlab{}.
\newblock \showarticletitle{Matrix factorization techniques for recommender
  systems}.
\newblock \bibinfo{journal}{\emph{Computer}} \bibinfo{volume}{42},
  \bibinfo{number}{8} (\bibinfo{year}{2009}), \bibinfo{pages}{30--37}.
\newblock


\bibitem[\protect\citeauthoryear{Mansoury, Abdollahpouri, Pechenizkiy,
  Mobasher, and Burke}{Mansoury et~al\mbox{.}}{2020}]%
        {MansouryAPMB20}
\bibfield{author}{\bibinfo{person}{Masoud Mansoury}, \bibinfo{person}{Himan
  Abdollahpouri}, \bibinfo{person}{Mykola Pechenizkiy},
  \bibinfo{person}{Bamshad Mobasher}, {and} \bibinfo{person}{Robin Burke}.}
  \bibinfo{year}{2020}\natexlab{}.
\newblock \showarticletitle{FairMatch: A Graph-based Approach for Improving
  Aggregate Diversity in Recommender Systems}. In
  \bibinfo{booktitle}{\emph{UMAP 2020 - Proceedings of the 28th ACM Conference
  on User Modeling, Adaptation and Personalization}}.
  \bibinfo{publisher}{Association for Computing Machinery, Inc}.
\newblock


\bibitem[\protect\citeauthoryear{Massa, Souren, Salvetti, and Tomasoni}{Massa
  et~al\mbox{.}}{2008}]%
        {KasperMD08}
\bibfield{author}{\bibinfo{person}{Paolo Massa}, \bibinfo{person}{Kasper
  Souren}, \bibinfo{person}{Martino Salvetti}, {and} \bibinfo{person}{Danilo
  Tomasoni}.} \bibinfo{year}{2008}\natexlab{}.
\newblock \showarticletitle{Trustlet, Open Research on Trust Metrics.}
\newblock \bibinfo{journal}{\emph{Scalable Computing: Practice and Experience}}
   \bibinfo{volume}{9} (\bibinfo{date}{01} \bibinfo{year}{2008}).
\newblock


\bibitem[\protect\citeauthoryear{Pager and Shepherd}{Pager and
  Shepherd}{2008}]%
        {pager2008sociology}
\bibfield{author}{\bibinfo{person}{Devah Pager} {and} \bibinfo{person}{Hana
  Shepherd}.} \bibinfo{year}{2008}\natexlab{}.
\newblock \showarticletitle{The sociology of discrimination: Racial
  discrimination in employment, housing, credit, and consumer markets}.
\newblock \bibinfo{journal}{\emph{Annu. Rev. Sociol}}  \bibinfo{volume}{34}
  (\bibinfo{year}{2008}), \bibinfo{pages}{181--209}.
\newblock


\bibitem[\protect\citeauthoryear{Park and Tuzhilin}{Park and Tuzhilin}{2008}]%
        {ParkT08}
\bibfield{author}{\bibinfo{person}{Yoon{-}Joo Park} {and}
  \bibinfo{person}{Alexander Tuzhilin}.} \bibinfo{year}{2008}\natexlab{}.
\newblock \showarticletitle{The long tail of recommender systems and how to
  leverage it}. In \bibinfo{booktitle}{\emph{ACM, RecSys}}.
  \bibinfo{publisher}{{ACM}}.
\newblock


\bibitem[\protect\citeauthoryear{Patro, Biswas, Ganguly, Gummadi, and
  Chakraborty}{Patro et~al\mbox{.}}{2020}]%
        {patro2020fairrec}
\bibfield{author}{\bibinfo{person}{Gourab~K Patro}, \bibinfo{person}{Arpita
  Biswas}, \bibinfo{person}{Niloy Ganguly}, \bibinfo{person}{Krishna~P
  Gummadi}, {and} \bibinfo{person}{Abhijnan Chakraborty}.}
  \bibinfo{year}{2020}\natexlab{}.
\newblock \showarticletitle{Fairrec: Two-sided fairness for personalized
  recommendations in two-sided platforms}. In
  \bibinfo{booktitle}{\emph{Proceedings of The Web Conference 2020}}.
  \bibinfo{pages}{1194--1204}.
\newblock


\bibitem[\protect\citeauthoryear{Rendle, Freudenthaler, Gantner, and
  Schmidt{-}Thieme}{Rendle et~al\mbox{.}}{2009}]%
        {RendleFGS09}
\bibfield{author}{\bibinfo{person}{Steffen Rendle}, \bibinfo{person}{Christoph
  Freudenthaler}, \bibinfo{person}{Zeno Gantner}, {and} \bibinfo{person}{Lars
  Schmidt{-}Thieme}.} \bibinfo{year}{2009}\natexlab{}.
\newblock \showarticletitle{{BPR:} Bayesian Personalized Ranking from Implicit
  Feedback}. In \bibinfo{booktitle}{\emph{UAI}}. \bibinfo{publisher}{{AUAI}
  Press}.
\newblock


\bibitem[\protect\citeauthoryear{Salakhutdinov and Mnih}{Salakhutdinov and
  Mnih}{2007}]%
        {SalakhutdinovM07}
\bibfield{author}{\bibinfo{person}{Ruslan Salakhutdinov} {and}
  \bibinfo{person}{Andriy Mnih}.} \bibinfo{year}{2007}\natexlab{}.
\newblock \showarticletitle{Probabilistic Matrix Factorization}. In
  \bibinfo{booktitle}{\emph{NIPS}}. \bibinfo{publisher}{Curran Associates,
  Inc.}, \bibinfo{pages}{1257--1264}.
\newblock


\bibitem[\protect\citeauthoryear{Steck}{Steck}{2011}]%
        {steck2011item}
\bibfield{author}{\bibinfo{person}{Harald Steck}.}
  \bibinfo{year}{2011}\natexlab{}.
\newblock \showarticletitle{Item popularity and recommendation accuracy}. In
  \bibinfo{booktitle}{\emph{Proceedings of the fifth ACM conference on
  Recommender systems}}. \bibinfo{pages}{125--132}.
\newblock


\bibitem[\protect\citeauthoryear{Wei, Feng, Chen, Wu, Yi, and He}{Wei
  et~al\mbox{.}}{2021}]%
        {wei2021model}
\bibfield{author}{\bibinfo{person}{Tianxin Wei}, \bibinfo{person}{Fuli Feng},
  \bibinfo{person}{Jiawei Chen}, \bibinfo{person}{Ziwei Wu},
  \bibinfo{person}{Jinfeng Yi}, {and} \bibinfo{person}{Xiangnan He}.}
  \bibinfo{year}{2021}\natexlab{}.
\newblock \showarticletitle{Model-agnostic counterfactual reasoning for
  eliminating popularity bias in recommender system}. In
  \bibinfo{booktitle}{\emph{Proceedings of the 27th ACM SIGKDD Conference on
  Knowledge Discovery \& Data Mining}}. \bibinfo{pages}{1791--1800}.
\newblock


\bibitem[\protect\citeauthoryear{Wu, Liu, Miao, Zhao, Guan, and Tang}{Wu
  et~al\mbox{.}}{2019}]%
        {WuLMZT19}
\bibfield{author}{\bibinfo{person}{Qiong Wu}, \bibinfo{person}{Yong Liu},
  \bibinfo{person}{Chunyan Miao}, \bibinfo{person}{Yin Zhao},
  \bibinfo{person}{Lu Guan}, {and} \bibinfo{person}{Haihong Tang}.}
  \bibinfo{year}{2019}\natexlab{}.
\newblock \showarticletitle{Recent Advances in Diversified Recommendation}.
\newblock  (\bibinfo{year}{2019}).
\newblock
\showeprint[arxiv]{1905.06589}~[cs.IR]


\bibitem[\protect\citeauthoryear{Zehlike, Bonchi, Castillo, Hajian, Megahed,
  and Baeza-Yates}{Zehlike et~al\mbox{.}}{2017}]%
        {zehlike2017fa}
\bibfield{author}{\bibinfo{person}{Meike Zehlike}, \bibinfo{person}{Francesco
  Bonchi}, \bibinfo{person}{Carlos Castillo}, \bibinfo{person}{Sara Hajian},
  \bibinfo{person}{Mohamed Megahed}, {and} \bibinfo{person}{Ricardo
  Baeza-Yates}.} \bibinfo{year}{2017}\natexlab{}.
\newblock \showarticletitle{Fa* ir: A fair top-k ranking algorithm}. In
  \bibinfo{booktitle}{\emph{Proceedings of the 2017 ACM on Conference on
  Information and Knowledge Management}}. \bibinfo{pages}{1569--1578}.
\newblock


\end{thebibliography}

%\clearpage
%\appendix
%
%\input{090appendix}

\end{document}